\documentclass[pdflatex,sn-basic]{sn-jnl}


\usepackage{graphicx}%
\usepackage{multirow}%
\usepackage{amsmath,amssymb,amsfonts}%
\usepackage{amsthm}%
\usepackage{mathrsfs}%
\usepackage[title]{appendix}%
\usepackage{xcolor}%
\usepackage{textcomp}%
\usepackage{manyfoot}%
\usepackage{booktabs}%
\usepackage{algorithm}%
\usepackage{algorithmicx}%
\usepackage{algpseudocode}%
\usepackage{listings}%
\usepackage[nameinlink,noabbrev]{cleveref}
\crefname{assumption}{assumption}{Assumptions}
\usepackage{comment}

\newcommand{\cX}{\mathcal{X}}
\newcommand{\cN}{\mathcal{N}}
\newcommand{\RR}{\mathbb{R}}
\newcommand{\PP}{\mathbb{P}}
\newcommand{\EE}{\mathbb{E}}
\newcommand{\sign}{\operatorname{sign}}
\newcommand{\eff}{{\rm eff}}

\newtheorem{theorem}{Theorem}

\newtheorem{proposition}{Proposition}
\newtheorem{lemma}{Lemma}
\newtheorem{corollary}{Corollary}

\theoremstyle{thmstyletwo}
\newtheorem{example}{Example}
\newtheorem{remark}{Remark}
\newtheorem{assumption}{Assumption}

\theoremstyle{thmstylethree}

\raggedbottom

\begin{document}

\title[On randomized step sizes in Metropolis--Hastings algorithms]{On randomized step sizes in Metropolis--Hastings algorithms}


\author*[1]{\fnm{Sebastiano} \sur{Grazzi}}\email{sebastiano.grazzi@unibocconi.it}

\author*[2]{\fnm{Samuel} \sur{Livingstone}}\email{samuel.livingstone@ucl.ac.uk}

\author[3]{\fnm{Lionel} \sur{Riou-Durand}}\email{lionel.riou-durand@insa-rouen.fr}

\affil[1]{\orgdiv{Department of Decision Sciences and BIDSA}, \orgname{ Bocconi University}, \orgaddress{\street{Via
Roentgen 1}, \city{City}, \postcode{20136}, \state{Milan}, \country{Italy}}}

\affil[2]{\orgdiv{Department of Statistical Science}, \orgname{University College}, \orgaddress{\street{Gower Street}, \city{London}, \postcode{WC1E 6BT}, \country{UK}}}

\affil[3]{\orgdiv{Laboratoire de Math\'{e}matiques} \orgname{de l'INSA Rouen Normandie}, \orgaddress{\street{Avenue de l'Universit\'{e}}, \city{76801 Saint-\'{E}tienne-du-Rouvray}, \country{France}}}


\abstract{The performance of Metropolis–Hastings algorithms is highly sensitive to the choice of step size, and miss-specification can lead to severe loss of efficiency. We study algorithms with randomized step sizes, considering both auxiliary-variable and marginalized constructions. We show that algorithms with a randomized step size inherit weak Poincar\'{e} inequalities/spectral gaps from their fixed-step-size counterparts under minimal conditions, and that the marginalized kernel should always be preferred in terms of asymptotic variance to the auxiliary-variable choice if it is implementable.  In addition we show that both types of randomization make an algorithm robust to tuning, meaning that spectral gaps decay polynomially as the step size is increasingly poorly chosen. We further show that step-size randomization often preserves high-dimensional scaling limits and algorithmic complexity, while increasing the optimal acceptance rate for Langevin and Hamiltonian samplers when an Exponential or Uniform distribution is chosen to randomize the step size. Theoretical results are complemented with a numerical study on challenging benchmarks such as Poisson regression, Neal's funnel and the Rosenbrock (banana) distribution.}

\keywords{Monte Carlo, Metropolis--Hastings, Randomized algorithms, Robust algorithms}



\maketitle

\section{Introduction}
Markov chain Monte Carlo (MCMC) methods are the gold standard for asymptotically unbiased Bayesian inference and consist of simulating a discrete-time Markov chain whose ergodic distribution coincides with the given Bayesian posterior $\pi$. Ergodic averages can then be used to estimate expectations with respect to the posterior. Many popular MCMC methods are based on the Metropolis--Hastings algorithm, in which each step of the Markov chain is simulated by proposing a new state according to a candidate Markov kernel and then accepting or rejecting it depending on the ratio of the \emph{target} distribution $\pi$ evaluated at the current and proposed locations.

The candidate kernel is often parametrized by a step size and algorithm performance crucially depends on this choice: when the step size is too small, the process often behaves like a random walk, while a step size that is too large leads to proposals that are rejected with high probability; in both cases the underlying Markov chain mixes slowly and ergodic averages are inefficient. A key contribution for understanding the trade-off between these regimes is given by diffusion/scaling limits, e.g. \citet{Roberts::2001, Roberts::1997,yang2020optimal, Beskos::2013}. Scaling limits facilitate analysis of the asymptotic performance of an algorithm in high dimensions and provide simple guidelines for optimally tuning the step size. They also imply an order of complexity of the algorithm with respect to the number of dimensions of the target distribution \citep{roberts2016complexity}.

\citet{livingstone2022barker} recently showed that the performance of standard gradient-based methods such as the Metropolis-adjusted Langevin algorithm (MALA) and Hamiltonian Monte Carlo (HMC) can deteriorate (exponentially) quickly when the step size is miss-specified. The framework is particularly insightful for explaining practical limitations of MCMC methods for target distributions with a complex geometric structure  \citep[such as Neal's funnel in][]{neal2011mcmc}, for which it is difficult to identify a global step size that allows the Markov chain to efficiently explore the whole space.

Motivated by this, we identify and explore simple modifications of popular algorithms based on randomizing the step size. We show that this simple modification leads to new methods that are provably less affected when the scale of the step size is miss-specified, while retaining  the same complexity when studied in the high-dimensional scaling limit regime. Step size randomization, or \emph{jitter}, has been proposed many times in the literature as a useful heuristic \citep[e.g.][]{neal2011mcmc}.  Here we offer a rigorous analysis, providing clear justifications for both theorists and practitioners.

\subsection{Main contributions}
We identify two approaches to endowing a Metropolis--Hastings algorithm with a randomized step size: an auxiliary-variable approach and a marginalized approach \citep[following the terminology of][]{titsias2018auxiliary}. We highlight that the marginalized approach is always more efficient in terms of asymptotic variance, but show empirically that the auxiliary approach often performs equally well and is a more tractable option. We also illustrate that, under mild conditions, both randomized kernels inherit a weak Poincar\'{e} inequality/spectral gap whenever the base kernel has one.  In both cases we analyze the spectral gaps as the mismatch between the step size and the scale of some of the target components becomes large, and show that the randomized step size algorithms are more robust than their base counterparts. In addition to this, we analyze the optimal choice of $h$ through the lens of scaling limits for randomized step size algorithms. We show that the step size randomization does not affect the high-dimensional complexity properties as measured by expected squared jumping distance, but it does in fact suggest an increased optimal acceptance rate when compared to the base algorithms.  We illustrate this with two concrete examples in the case of MALA and Hamiltonian Monte Carlo (HMC).
We complement our theoretical findings with a comprehensive numerical study, illustrating the benefits of a randomized step size on challenging benchmarks such as Neal's funnel and the Rosenbrock (banana) distribution.  

\subsection{Related literature}
The robustness to tuning framework considered here was introduced in \cite{livingstone2022barker}, where it was used to motivate Metropolis--Hastings algorithms with Barker proposals. In our work, we show that similar robustness properties can also be achieved by randomizing the step sizes of classical algorithms. Scaling limits are a widely used approach in the study of MCMC algorithms, originating in the seminal work \citet{Roberts::1997}.  Our analysis is most closely related to other works directly targeting the expected squared jumping distance \citep[e.g.][]{sherlock2009optimal, vogrinc2023optimal}.

\section{MCMC Algorithms with randomized step size}

\subsection{Metropolis--Hastings algorithms}\label{sec: mh algorithm}
The Metropolis--Hastings algorithm \citep{metropolis1953equation, Hastings::1970} is a generic algorithm used to produce a Markov chain $(X_i)_{i=0,1,\dots}$ with limiting distribution equal to a given target distribution $\pi$ on some space $\cX$. Starting from an initialization $X_0$, each random variable $X_{i+1}|(X_i = x)$ is simulated by first drawing a `proposal' $y$ from a candidate Markov kernel $Q_h(x, \cdot)$, which we index by a tuning parameter $h > 0$ (e.g. a step size). Assuming that $\cX \subset \mathbb{R}^d$, $\pi(dx):= \pi(x)dx$ and $Q_h(x,dy) := Q_h(x,y)dy$ for all $x,y$, then $X_{i+1}|(X_i = x)$ is set to be equal to $y$ with probability 
\begin{equation}
    \label{eq: acceptance rejection}
    \alpha_h(x,y) = \min\left(\frac{\pi(y)Q_h(y, x)}{\pi(x)Q_h(x, y)} ,1\right)
\end{equation}
whenever $\pi(x)Q_h(x,y) >0$, and equal to $x$ otherwise. The corresponding Markov transition kernel takes the form
\begin{equation}
    \label{eq: MH kernel}
    P_h(x,d y) = Q_h(x,d y)\alpha_h(x,y)+\left(1-\int_\mathcal{X} \alpha_h(x,u) Q_h(x,d u)\right)\delta_x(d y).
\end{equation}
It is not difficult to verify that $P_h$ is $\pi$-invariant for any $h > 0$, since it satisfies the detailed balance condition relative to $\pi$, see for example \cite{roberts2004general} for an overview.

\subsection{Randomized step sizes via auxiliary-variable and marginalized kernels}
\label{sec:randomized_mh}
We consider two different approaches to endowing a Metropolis--Hastings kernel $P_h$ with a randomized step size.  In both cases the step size will be generated via a distribution $\mu(dz) := \mu(z)dz$ with positive mass on a subset of $(0,+\infty)$, for which we also assume the following.

\begin{assumption} \label{ass:mu_positive_at_zero}
There exists $C>0$ such that for all $h \ge 1$ and any $z > 0$
\begin{equation}
\mu(z/h)\ge C\mu(z).
\end{equation}
\end{assumption}

Assumption \ref{ass:mu_positive_at_zero} ensures that $\mu$ has non-negligible mass near zero, which is important for the spectral gap robustness results of Section \ref{sec:analysis}.  

Let $(P_{h})_{h>0}$ be the family of  Metropolis--Hastings kernels satisfying \eqref{eq: MH kernel} and indexed by some parameter $h>0$.  
Throughout, we assume that for all $x \in \mathcal{X}$ and any $A$ in the associated Borel $\sigma$-field, the function $z \mapsto P_{hz}(x,A)$ is $\mu$-measurable.

\begin{example}
We define the \emph{Auxiliary-variable} transition kernel as the mixture
$$
\overline{P}_h(x,d y):= \int_0^{\infty} \mu(z)P_{hz}(x,d y)d z.
$$
\end{example}

We can simulate $X_{i+1}$ from $\overline{P}_h(x,\cdot)$ given the current state $X_i = x$ in two simple steps: 
\begin{enumerate}
    \item Sample $z \sim \mu$
    \item Sample $X_{i+1} \sim P_{hz}(x,\cdot)$ as in \eqref{eq: MH kernel}.
\end{enumerate} 
See Algorithm~\ref{alg:aux_mh} for more detail.  

\begin{algorithm}[]
\caption{Inner loop Auxiliary-variable MH}\label{alg:aux_mh}
\begin{algorithmic}[1]
\Require $X_{i} = x \in \cX$, $h > 0$,  $\mu(\cdot)$, family of kernels $(Q_h)_{h>0}$.
\State $z \sim \mu$
\State $y \sim Q_{hz}(x, \cdot)$
\If {$\mathrm{Unif}([0,1]) \le \alpha_{hz}(x,y)$} 
\State $X_{i+1} = y$ 
\Else
\State $X_{i+1} = x$ 
\EndIf
\State \Return $X_{i+1}$.
\end{algorithmic}
\end{algorithm}

\begin{example}
We define the \emph{Marginalized} kernel as
\begin{equation}
    \label{eq: marginalized_kernels}
M_h(x,d y)=\overline{Q}_h(x, y)\widetilde\alpha_h(x,y) d y+\left(1-\int_\mathcal{X} \overline{Q}_h(x,u)\widetilde\alpha_h(x,u) du \right)\delta_x(d y)
\end{equation}
where
$$
\overline{Q}_h(x,y):=\int_0^{\infty} \mu(z)Q_{hz}(x, y)d z
$$
and 
$$
\widetilde\alpha_h(x,y) = \min\left(\frac{\pi( y)\overline{Q}_h(y, x)}{\pi(x)\overline{Q}_h(x,y)}, 1\right)
$$
is chosen such that each member of the family $(M_h)_{h>0}$ is $\pi$-reversible. 
\end{example}

While $M_h$ always exists, it is only possible to simulate from it when a closed-form expression is available for the transition density $\overline Q_h$. The only difference between the auxiliary-variable kernel $\overline{P}_h$ and the marginalized kernel $M_h$ is in the acceptance probability in Algorithm~\ref{alg:aux_mh}, line 3, which changes from $\alpha_{hz}(x,y)$ for the auxiliary kernel to  $\widetilde \alpha_h(x,y)$ for the marginalized kernel, for which the variable $z$ has been marginalized. 

The auxiliary-variable construction for randomized step sizes has been known and advocated for some time, e.g. \cite{neal2011mcmc} suggests this in the context of Hamiltonian Monte Carlo to avoid proposals being either identical or very close to the current state when the Hamiltonian flow is periodic.  To the best of our knowledge, however, there is little rigorous understanding of how this influences the properties of an algorithm.

\begin{remark}
    When the proposal kernel is symmetric, i.e. $Q_h(x,y) = Q_h(y,x)$, then $\alpha_{hz} = \alpha_{z}  = \tilde \alpha_h$ and Marginalized and Auxiliary-variable kernels coincide. 
\end{remark}

\subsection{Examples of randomized kernels}
\label{subsec:examples_randomized_kernels}

As shown above, sampling from the auxiliary-variable kernel is a straightforward procedure, but this is not necessarily true for the marginalized kernel $M_h$.  Both the Gaussian random walk Metropolis and the Barker proposal of \cite{livingstone2022barker} can, however, be derived for example as marginalized versions of simpler base algorithms when the dimension $d = 1$.  In the case of the random walk, consider the Metropolis--Hastings algorithm with proposals generated from the current state $x$ via the transformation $y = x + \sqrt{h}\xi$, where $\xi \in \{-1,+1\}$ and $\mathbb{P}(\xi = -1) = 1/2$. Choosing the randomizing distribution $\mu$ to be half-Normal, meaning a standard Normal distribution restricted to the positive half-line, means that the marginalized kernel $M_h$ becomes a random walk Metropolis with Gaussian noise of variance $h$.  Similarly, a \emph{Rademacher Barker proposal} algorithm has been discussed in \cite{vogrinc2023optimal}, in which proposals take the form $y = x + b(x,\sqrt{h}\xi)\cdot \sqrt{h}\xi$, with $\xi$ following the same Rademacher distribution and $b(x,\sqrt{h}\xi) \in \{-1, +1\}$ such that $\mathbb{P}(b(x,\sqrt{h}\xi) = +1) := 1/(1+\exp(- (\log\pi)'(x)\cdot \sqrt{h}\xi))$.  Again choosing $\mu$ to be half-Normal induces the standard Barker proposal scheme of \cite{livingstone2022barker}, and setting $\mu$ to be a Gaussian truncated to be positive with mean $\surd(1-\sigma^2)$ and variance $\sigma^2$ for some $\sigma < 1$ gives the bi-modal Barker proposal described in Section 4.1 of \citet{vogrinc2023optimal}.

We introduce here two new randomized kernels in the multi-dimensional case, corresponding to marginalized versions of the Metropolis-adjusted Langevin algorithm (MALA), for which the base proposal kernel takes the form
\begin{equation}
    \label{eq: mala kernel}
    Q_h(x, y) = \phi_d(y ; \, x + h\nabla \log\pi(x), 2h I_{d\times d})
\end{equation}
where $\phi_d(y ; \, \mu, \Sigma)$ denotes the  $d$-dimensional Gaussian density with mean $\mu$ and covariance matrix $\Sigma$ evaluated at $y$. The proposal in \eqref{eq: mala kernel} is justified as it corresponds to an increment of the Euler discretization with step size $h$  of the Langevin diffusion
$$
d X_t = \nabla \log \pi(X_t)  d t + \sqrt{2} d B_t
,$$
which is known to be $\pi$-invariant, see \cite{RobertsTweedie::1996} for details.  We identify and analyze two families of implementable Marginalized kernels \eqref{eq: marginalized_kernels}, which are based on randomizing the step size of the MALA kernel in \eqref{eq: mala kernel}. The first modification is when $\mu$ is a uniform distribution on $[0,1]$. In this case, 
\begin{align}
\label{eq: bessel}
\overline{Q}_h(x,y) \propto e^{c}  \check K_{1-d/2}(a, b),
\end{align}
where $a = \|y - x\|^2 /(4h)$, $b = h \|\nabla \log \pi(x)\|^2/4$, $c = \langle y - x, \, \nabla \log \pi (x) \rangle /2$ and $\check K_\nu(a, b) = \int_1^\infty t^{-\nu-1} \exp(-at - b/t) dt$ is the upper incomplete modified Bessel function of the second kind. When the dimension $d$ is odd, \eqref{eq: bessel} can be re-expressed in terms of the complementary error function.

The second version is when $\mu$ follows an $\text{Exp}(1)$ distribution, leading to 
$$
\overline{Q}_h(x,y) \propto  e^{c} \left(\frac{a}{b}\right)^{\frac{1}{2} - \frac{d}{4}} K_{1 - d/2}(2 \sqrt{ab}),
$$
where $a = \|y - x\|^2 /(4h)$, $b = 1 + h \|\nabla \log \pi(x)\|^2/4$, $c = \langle y - x, \, \nabla \log \pi(x) \rangle /2$ and $K_\nu(x)$ is the modified Bessel function of the second kind, which is readily computable by standard scientific programming languages. More generally, whenever $\mu$ is chosen as an instance of the Generalized Inverse Gaussian (GIG) family, then the marginalized kernel will follow a generalized hyperbolic distribution, with density expressed in terms of a modified Bessel function as above \citep{jorgensen2012statistical}.  Other natural choices for $\mu$ within the GIG family are Gamma, inverse Gamma and inverse Gaussian distributions.  We compare both of these new marginalized algorithms to their auxiliary and base counterparts in Section \ref{sec:experiments}.

\section{Spectral gap analysis} \label{sec:analysis}
We present here several results related to Auxiliary-variable and Marginalized Metropolis--Hastings transition kernels.  We first consider Dirichlet forms, comparing the kernels to each other and $P_h$, and also showing in Section \ref{subsubsec:robustness_analysis} that for any base kernel $P_h$ both $\overline P_h$ and $M_h$ will attain the `robustness to tuning' property that spectral gaps decay at a polynomial rate in $h^{-1}$ as $h \uparrow \infty$ \citep{livingstone2022barker}. Crucially, this robustness analysis requires no assumptions on the target distribution other than that $P_1$ possesses a spectral gap.  In the absence of this, our conclusions could also be modified to the case of weak Poincar\'{e} inequalities using the results of \cite{andrieu2022comparison}.  We also note that the results hold in general for any Markov kernel indexed by a one-dimensional parameter $h$, and therefore extend to other cases than when $h$ is the algorithmic step size.  

\subsection{Dirichlet forms, Poincar\'{e} inequalities \& spectral gaps}

For any $f:\mathcal{X} \to \mathbb{R}$, and any $\pi$-reversible Markov kernel $P$, we define an associated Dirichlet form
$$
\mathcal{E}(P,f) := \frac{1}{2}\int (f(y)-f(x))^2\pi(d x) P(x,d y).
$$
Let $L_0^2(\pi) := \{ f:\mathcal{X} \to \mathbb{R} ~|~ \int f(x)^2 \pi(d x) <\infty,\, \int f(x) \pi(dx) = 0 \}$.  The Markov chain $(X_i)_{i=1,2,\dots}$ with transition kernel $P$ satisfies a \emph{Poincar\'{e} inequality} with Poincar\'{e} constant $c \in (0,\infty)$ if for all $f \in L_0^2(\pi)$ 
\begin{equation} \label{eq:poincare}
\text{Var}_\pi(f) \leq c \mathcal{E}(P,f).
\end{equation}
The (right) \emph{spectral gap} for $P$ is then defined as
\[
\mathrm{Gap}(P) = \inf_{f \in L^2_0(\pi)} \frac{\mathcal{E}(P,f)}{\text{Var}_\pi(f)}.
\]
 If \eqref{eq:poincare} holds with $c>0$, then $\mathrm{Gap}(P)>0$ and it is straightforward to see that $1/\mathrm{Gap}(P)$ is the smallest choice of $c$ for which \eqref{eq:poincare} can hold. For any kernel $P$ the corresponding Markov operator $P$, defined point-wise as $Pf(x) := \int f(y)P(x,dy)$ for any $f \in L_0^2(\pi)$, is called positive if it has a spectrum that is wholly contained in $[0,1]$. If $P$ is a positive operator and \eqref{eq:poincare} holds, then for any $f \in L^2_0(\pi)$
$$
\text{Var}_\pi(Pf) \leq (1-\text{Gap}(P))^2 \cdot \text{Var}_\pi(f).
$$
From this equation, geometric convergence in total variation distance towards equilibrium can be deduced.  Precisely, noting for any signed measure $\nu$ that $\|\nu\|_{TV} := \sup_{f :\mathcal{X} \to [0,1]}\left|\nu(f)\right|$, it holds that
\[
\|\nu P^n - \pi\|_{TV} \leq \frac{1}{2}\chi^2(\nu||\pi)^{1/2}\cdot (1-\text{Gap}(P))^n,
\]
where $\nu P(\cdot) := \int \nu(dx)P(x,\cdot)$ and $\chi^2(\nu||\pi) := \int ((d\nu/d\pi)(x) - 1)^2\pi(dx)$ denotes the $\chi^2$-divergence between $\nu$ and $\pi$.  In words, if the chain is initialized from a probability distribution $\nu$, then the distance to equilibrium decays geometrically with rate $(1-\text{Gap}(P))$, and the distance also depends on how close $\nu$ is to the limiting distribution $\pi$.  When $P$ is not positive then it can be replaced with its lazy counterpart $(P+I)/2$.

The spectral gap can also be related to the asymptotic variance of ergodic averages.  Indeed if $(X_i)_{i=1,2,\dots}$ is a Markov chain with transition kernel $P$ and $X_0 \sim \pi$ then for any $g \in L^2_0(\pi)$ it holds that
\[
\lim_{n \to \infty} \frac{1}{n}\text{Var}\left(\sum_{i=1}^n g(X_i) \right) \leq \left( \frac{2}{\text{Gap}(P)} - 1 \right)\text{Var}_\pi(g).
\]
This can be seen by first noting that if $\text{Gap}(P) > 0$ then the asymptotic variance for $g$ can be written $2\mathcal{E}(P,h_g) - \text{Var}_\pi(g)$, where $h_g$ solves the Poisson equation $(I-P)h_g = g$ \citep[e.g.][]{roberts2004general}, and then noting that in this case the Dirichlet form can be straightforwardly re-expressed as $\mathcal{E}(P,h_g) = \int g(x)h_g(x)\pi(dx)$.  The Cauchy--Schwarz inequality therefore implies that $\mathcal{E}(P,h_g) \leq \surd\text{Var}_\pi(g)\surd\text{Var}_\pi(h_g)$, and applying the Poincar\'{e} inequality \eqref{eq:poincare} to $\surd \text{Var}_\pi(h_g)$ with the constant $c = 1/\text{Gap}(P)$ gives upon rearrangement the inequality $\mathcal{E}(P,h_g) \leq \text{Var}_\pi(g)/\text{Gap}(P)$. Substituting into the asymptotic variance expression gives the result.

In many settings a Markov chain does not have a positive spectral gap, or equivalently a Poincar\'{e} inequality does not hold.  In this case Dirichlet forms can still be used to understand the properties of the Markov chain via a \emph{weak Poincar\'{e} inequality}
\begin{equation} \label{eq:wpi}
\text{Var}_\pi(f) \leq s\mathcal{E}(P,f) + \beta(s)\Phi(f),
\end{equation}
for some $s > 0$ and suitable real-valued functions $\beta$ and $\Phi$  \citep[which must also satisfy additional conditions, see e.g. Section 2.2.1 of][]{andrieu2022comparison}.  It can be shown that if \eqref{eq:wpi} holds then the Markov chain converges to equilibrium at a rate that is connected to the functions $\beta$ and $\Phi$ \citep[see e.g.][]{andrieu2022comparison}.

\subsection{Orderings and Robustness to tuning}
Markov chains can be compared in various ways. The approach introduced by Peskun \citep{peskun1973optimum} is to construct a partial order on $\pi$-reversible Markov transition kernels. If $P_1$ and $P_2$ are two such kernels then $P_1 \succeq P_2$ in the sense of Peskun if $P_1(x,A\cap \{x\}^c) \geq P_2(x,A\cap \{x\}^c)$ for all $x \in \mathcal{X}$ and any event $A$.  Peskun showed that when $\mathcal{X}$ is finite such an ordering implies that asymptotic variances computed using ergodic averages from $P_1$ will be smaller than those computed from $P_2$ for any square-integrable test function $g$.  Tierney \citep{tierney1998note} then extended this result to general state spaces.  It is not difficult to see that if $P_1 \succeq P_2$ then $\mathcal{E}(P_1,f) \geq \mathcal{E}(P_2,f)$ for all $f \in L_0^2(\pi)$, from which it is natural to consider directly ordering Dirichlet forms.  The same asymptotic variance conclusion holds under this ordering owing to its Dirichlet form representation given in the previous section.

The criterion of Peskun, often called `off-diagonal domination', can be too stringent to be satisfied in many applications, and applies only to reversible Markov chains.  It has been extended in various directions, see e.g. \cite{mira1999ordering,andrieu2021peskun}. One useful generalization for our purposes is to consider for some $\omega > 0$ the relaxation
\begin{equation} \label{eq:weak_peskun}
\mathcal{E}(P_1,g) \geq \omega \mathcal{E}(P_2,g).
\end{equation}
This weakened ordering implies that $\text{Gap}(P_1) \geq \omega \text{Gap}(P_2)$, and also that asymptotic variances under $P_1$ are less than those under $P_2$ up to a constant additive term depending on $\omega$ \citep{andrieu2018uniform}.  In addition \cite{andrieu2022comparison} show that in the absence of a spectral gap conditions such as \eqref{eq:weak_peskun} can be used to establish that a weak Poincar\'{e} inequality holds for $P_1$ provided that such an inequality holds for $P_2$.

The ordering \eqref{eq:weak_peskun} was used in \cite{livingstone2022barker} to assess the robustness of a Metropolis--Hastings algorithm to the choice of step size.  The authors introduce a framework in which the algorithm targets a distribution with density $\pi^{(\lambda)}$, where $\lambda$ denotes the scale of the first component.  Crucially, the same choice of algorithmic step size is chosen for all different values of $\lambda$, meaning that as $\lambda \downarrow 0$ the result is an algorithm for which the step size is much larger than the optimal choice, and when $\lambda \uparrow \infty$ it is much smaller than optimal.  The authors study the rate at which the spectral gap of the corresponding Markov chain decays to zero as either $\lambda \downarrow 0$ or $\lambda \uparrow \infty$.  If the rate is polynomial in $1/\lambda$ as $\lambda \downarrow 0$ then the algorithm is called \textit{robust to tuning}. The authors show that the random walk Metropolis and Barker proposal are robust to tuning as $\lambda \downarrow 0$ while MALA and HMC are not, while all algorithms behave similarly as $\lambda \uparrow \infty$.  They then showcase through numerical experiments that robustness to tuning is a very beneficial property for learning optimal algorithmic tuning parameters using adaptive Markov chain Monte Carlo.

It can be shown that studying the above algorithms with a fixed choice of step size as the target $\pi^{(\lambda)}$ changes in the manner described above is equivalent to studying the algorithms with a fixed target distribution $\pi^{(1)}$ where the algorithmic step size $h$ is allowed to vary.  Precisely, there is an isomorphism between the Metropolis--Hastings algorithm with proposal kernel $Q_1$ targeting $\pi^{(\lambda)}$ and the algorithm with proposal kernel $Q_h$ targeting $\pi^{(1)}$ if $h := 1/\lambda^2$, which implies that the two resulting Markov chains will have the same spectral gap \cite[see the supplement to][for details]{livingstone2022barker}.  Re-framing the results of \cite{livingstone2022barker} in these terms, the implication is that for MALA and HMC algorithm performance decays extremely quickly as the step size is increased beyond the optimal choice, meaning that performance is crucially dependent on this parameter.  By contrast, in the random walk Metropolis and Barker proposal the decay in efficacy is much less severe.  The authors showcase that in the context of learning a step size iteratively through a stochastic approximation scheme depending on the Markov chain samples, as is commonly done in practice, if the sample quality is still adequate when the step size is larger than optimal then this can facilitate much faster optimization.

MALA and the Barker proposal \citep{livingstone2022barker,hird2020fresh,vogrinc2023optimal} are natural comparators using this framework since the step sizes in both algorithms clearly have the same physical meaning.  Both are Metropolis--Hastings algorithms in which the proposal can be constructed by numerically integrating the overdamped Langevin diffusion for a fixed time-step, and then applying a Metropolis--Hastings filter to correct for discretisation bias at equilibrium.  In the case of the Langevin algorithm this fact is well-known and the numerical scheme is the classical Euler--Maruyama method. In the Barker case the numerical scheme is the skew-symmetric method introduced in \cite{iguchi2024skew}. In both cases the step size $h$ corresponds to the physical time for which the diffusion process is integrated forward to generate a proposal. What is clear from the results of \cite{livingstone2022barker} is that the rate at which the numerical scheme becomes unstable is different between the two algorithms.

\subsection{Spectral gap results for kernels with randomized step size}
\label{subsubsec:robustness_analysis}
We now present ordering and robustness results for Metropolis--Hastings algorithms with a randomized step size, considering both the auxiliary-variable and marginalized approaches introduced in Section \ref{sec:randomized_mh}.  We begin with a direct ordering.

\begin{proposition}\label{prop:ordering_margvsaux} For any $f \in L^2_0(\pi)$, any $h>0$ and any base kernel $P_h$ we have
\[
\mathcal{E}(M_h,f) \ge \mathcal{E}(\overline P_h, f),
\]    
from which it follows that
\[
\mathrm{Gap}(M_h) \ge \mathrm{Gap}(\overline P_h),
\]
and that an ergodic average constructed using $M_h$ will always have smaller asymptotic variance than the same ergodic average using $\overline P_h$.
\end{proposition}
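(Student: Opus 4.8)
The plan is to reduce the Dirichlet form inequality to a pointwise comparison of transition densities and then invoke an elementary fact relating the integral of a minimum to the minimum of the integrals. First I would note that the Dirichlet form $\mathcal{E}(P,f)$ is insensitive to the rejection part of a Metropolis--Hastings kernel: in the decomposition \eqref{eq: MH kernel} the term proportional to $\delta_x(dy)$ contributes $(f(x)-f(x))^2=0$. Writing the density parts of $M_h$ and $\overline P_h$ as
$$
m_h(x,y) := \overline Q_h(x,y)\,\widetilde\alpha_h(x,y), \qquad
\bar p_h(x,y) := \int_0^\infty \mu(z)\, Q_{hz}(x,y)\,\alpha_{hz}(x,y)\,dz
$$
respectively (the latter being read off from $\overline P_h(x,dy)=\int_0^\infty \mu(z)P_{hz}(x,dy)\,dz$ together with \eqref{eq: MH kernel}), one has $\mathcal{E}(M_h,f)=\tfrac12\int (f(y)-f(x))^2\pi(x)m_h(x,y)\,dx\,dy$, and the analogous expression for $\overline P_h$ with $\bar p_h$ in place of $m_h$. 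Since $(f(y)-f(x))^2\pi(x)\ge 0$, it suffices to prove the pointwise bound $\pi(x)m_h(x,y)\ge\pi(x)\bar p_h(x,y)$ for Lebesgue-almost every $(x,y)$.

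Next I would rewrite both sides using the standard identity $\pi(x)Q_{hz}(x,y)\alpha_{hz}(x,y)=\min\big(\pi(x)Q_{hz}(x,y),\,\pi(y)Q_{hz}(y,x)\big)$, valid for every $z>0$ (and trivially, with both sides zero, on the set where $\pi(x)Q_{hz}(x,y)=0$, by the convention in \eqref{eq: acceptance rejection}). Integrating against $\mu(z)\,dz$ and using $\pi(x)\overline Q_h(x,y)=\int_0^\infty\mu(z)\pi(x)Q_{hz}(x,y)\,dz$ together with its $x\leftrightarrow y$ analogue, this yields
$$
\pi(x)\bar p_h(x,y)=\int_0^\infty \min\big(\mu(z)\pi(x)Q_{hz}(x,y),\,\mu(z)\pi(y)Q_{hz}(y,x)\big)\,dz
$$
and
$$
\pi(x)m_h(x,y)=\min\!\left(\int_0^\infty\mu(z)\pi(x)Q_{hz}(x,y)\,dz,\ \int_0^\infty\mu(z)\pi(y)Q_{hz}(y,x)\,dz\right).
$$
The required inequality is then exactly the elementary fact that for nonnegative measurable functions $a(\cdot),b(\cdot)$ one has $\int\min(a,b)\le\min\big(\int a,\int b\big)$, since $\int\min(a,b)\le\int a$ and $\int\min(a,b)\le\int b$. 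This establishes $\mathcal{E}(M_h,f)\ge\mathcal{E}(\overline P_h,f)$ for all $f\in L^2_0(\pi)$.

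The spectral gap ordering then follows immediately from the variational characterization $\mathrm{Gap}(P)=\inf_{f\in L^2_0(\pi)}\mathcal{E}(P,f)/\text{Var}_\pi(f)$. For the asymptotic variance claim I would note that both $M_h$ (by construction) and $\overline P_h$ (being a mixture of the $\pi$-reversible kernels $P_{hz}$) are $\pi$-reversible; indeed the pointwise bound above shows the stronger statement $M_h\succeq\overline P_h$ in the sense of Peskun, so the conclusion follows from the Peskun--Tierney theory recalled in Section~\ref{sec:analysis}, or equivalently from the ordering \eqref{eq:weak_peskun} with $\omega=1$ via its Dirichlet-form representation. I do not expect a substantive obstacle here; the only point requiring mild care is the measure-theoretic bookkeeping on the null sets where the various proposal densities vanish, together with checking $\pi$-reversibility of both kernels so that the variance comparison applies.
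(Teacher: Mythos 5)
Your proof is correct and is essentially the standard argument: the paper does not prove this proposition directly but cites Proposition 1 of \cite{titsias2018auxiliary}, whose proof is precisely your reduction to an off-diagonal (Peskun) domination via the identity $\pi(x)Q(x,y)\alpha(x,y)=\min(\pi(x)Q(x,y),\pi(y)Q(y,x))$ and the elementary inequality $\int\min(a,b)\le\min\left(\int a,\int b\right)$. Nothing further is needed.
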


This result follows directly from Proposition 1 of \cite{titsias2018auxiliary}, see also \cite{storvik2011flexibility}.  The practical conclusion is that if the marginalized kernel is tractable then this should always be preferred.  We show with the next proposition and in the numerical simulations in Section \ref{sec:experiments}, however, that the difference in performance can be very small in many practical cases of interest.

\begin{proposition}[Reverse comparison for randomized Metropolis--Hastings kernels]
For $z>0$ write
\[
        r_{hz}(x,y)
        :=
        \frac{\pi(y)Q_{hz}(y,x)}
             {\pi(x)Q_{hz}(x,y)}
\]
whenever the denominator is positive. Suppose that there exists $K\geq 1$
such that for $\pi(dx)\overline Q_h(x,y)dy$-almost every $(x,y)$
\[
        \operatorname*{ess\,sup}_{z,z'}
        \frac{r_{hz}(x,y)}{r_{hz'}(x,y)}
        \le K .
        \tag{11}
\]
Then
\[
        \operatorname{Gap}(\overline P_h)
        \ge
        K^{-1}\operatorname{Gap}(M_h),
\]
and for any $g\in L^2_0(\pi)$, writing
\[
\sigma^2(P, g) := \lim_{n \to \infty} \frac{1}{n}\textup{Var}\left(\sum_{i=1}^n g(X_i) \right), \quad X_i \sim P(X_{i-1},\cdot)
\]
with $X_0 \sim \pi$, it holds that
\[
        \sigma^2(\overline P_h,g)
        \le
        K\sigma^2(M_h,g)
        +(K-1)\operatorname{Var}_\pi(g)
        \tag{13}
\]
whenever these quantities are finite.
\end{proposition}

\begin{proof}
For $x\neq y$, write
\[
        r_{hz}(x,y)
        =
        \frac{\pi(y)Q_{hz}(y,x)}
             {\pi(x)Q_{hz}(x,y)}
\]
and let
\[
        \nu_{x,y}(dz)
        =
        \frac{Q_{hz}(x,y)}{\overline Q_h(x,y)}\,\mu(dz),
\]
with the case $\overline Q_h(x,y)=0$ being trivial. The off-diagonal
density of $\overline P_h$ is
\[
        \overline Q_h(x,y)
        \int \min\left(1, r_{hz}(x,y)\right)\nu_{x,y}(dz),
\]
whereas the off-diagonal density of $M_h$ is
\[
        \overline Q_h(x,y)
        \min\left(
        1,
        \int r_{hz}(x,y)\,\nu_{x,y}(dz)
        \right).
\]
By assumption
\[
        r_{hz}(x,y)
        \ge
        K^{-1}\int r_{hu}(x,y)\,\nu_{x,y}(du),
        \qquad \nu_{x,y}\text{-a.s.},
\]
meaning
\[
        \int \min\left(1, r_{hz}(x,y)\right)\nu_{x,y}(dz)
        \ge
        K^{-1}\min\left(
        1,
        \int r_{hz}(x,y)\,\nu_{x,y}(dz)
        \right).
\]
This implies that for any $f \in L_0^2(\pi)$ $\mathcal{E}(\overline P_h,f)\ge K^{-1}\mathcal{E}(M_h, f)$.  The results then follow directly by applying Lemma 32 of \cite{andrieu2018uniform}.
\end{proof}

Using the above, we can compare both $M_h$ and $\overline P_h$ to the base kernel $P_h$ in terms of spectral gaps.  We begin with the following assumption on $P_h$.

\begin{assumption} \label{ass:p_has_a_spectral_gap}
The kernel $P_h$ satisfies one of the following conditions:
\begin{itemize}
    \item[(i)] (Weak Poincar\'{e} inequality) There is a $\mu$-measurable $H \subset (0,\infty)$ such that for any $h \in H$ the weak Poincar\'{e} inequality
    \[
    \textup{Var}_\pi(f) \leq c(h) \mathcal{E}(P_h,f) + \beta(c(h))\Phi(f)
    \]
    holds with $0 < c(h) < \infty$, for suitable functions $\beta$ and $\Phi$.
    \item[(ii)] (Poincar\'{e} inequality) Part (i) holds with $\beta(c(h)) \equiv 0$.
\end{itemize}
\end{assumption}

Assumption \ref{ass:p_has_a_spectral_gap}(ii) implies that for any $h \in H$ the kernel $P_h$ has a positive spectral gap, and that
\[
\text{Gap}(P_h) \geq \frac{1}{c(h)}.
\]
Provided that either this or Assumption \ref{ass:p_has_a_spectral_gap}(i) is true then some conclusions can be drawn about $\overline P_h$.  We begin with a technical result.

\begin{lemma} \label{lem:dirichlet_form_aux}
    For any $f \in L^2_0(\pi)$, it holds that
    \begin{equation}
    \mathcal{E}(\overline P_h,f) = \int \mathcal{E}(P_{hz},f)\mu(dz).
    \end{equation}
\end{lemma}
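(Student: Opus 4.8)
The plan is to unfold the definition of the Dirichlet form of $\overline P_h$, insert the mixture representation $\overline P_h(x,dy)=\int_0^\infty\mu(z)P_{hz}(x,dy)\,dz$, and then interchange the order of integration. Writing out the definition,
\[
\mathcal{E}(\overline P_h,f) = \frac12\int_{\mathcal X}\pi(dx)\int_{\mathcal X}(f(y)-f(x))^2\,\overline P_h(x,dy) = \frac12\int_{\mathcal X}\pi(dx)\int_{\mathcal X}(f(y)-f(x))^2\int_0^\infty\mu(z)P_{hz}(x,dy)\,dz .
\]
The integrand $(x,y,z)\mapsto (f(y)-f(x))^2\mu(z)$ is non-negative and jointly measurable --- here one uses that $(z,x,A)\mapsto P_{hz}(x,A)$ is a measurable kernel, which is implicit in $(P_h)_{h>0}$ being a well-defined family of transition kernels, together with measurability of the density $z\mapsto\mu(z)$ --- so Tonelli's theorem applies and the $z$-integral may be moved to the outside:
\[
\mathcal{E}(\overline P_h,f) = \int_0^\infty\mu(z)\left[\frac12\int_{\mathcal X}\pi(dx)\int_{\mathcal X}(f(y)-f(x))^2\,P_{hz}(x,dy)\right]dz = \int_0^\infty\mathcal{E}(P_{hz},f)\,\mu(dz).
\]
This is exactly the claimed identity.

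A couple of minor points round off the argument. First, $\overline P_h$ is $\pi$-reversible, being a mixture over $z$ of the $\pi$-reversible kernels $P_{hz}$, so the left-hand side is a genuine Dirichlet form in the sense defined above. Second, because $(f(y)-f(x))^2\ge 0$ the interchange holds in $[0,\infty]$ with no integrability assumption; to see that both sides are finite, note that each $P_{hz}$ is $\pi$-invariant, so expanding the square gives $\mathcal{E}(P_{hz},f)=\text{Var}_\pi(f)-\int f\,(P_{hz}f)\,d\pi\le 2\,\text{Var}_\pi(f)<\infty$ for every $z$ (using Cauchy--Schwarz and $\|P_{hz}f\|_{L^2(\pi)}\le\|f\|_{L^2(\pi)}$), and therefore $\int_0^\infty\mathcal{E}(P_{hz},f)\,\mu(dz)\le 2\,\text{Var}_\pi(f)<\infty$ as well.

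As for difficulty: there is essentially no obstacle. The statement is a bookkeeping identity whose only substantive ingredient is Tonelli's theorem, and the only thing requiring a moment's thought is the joint measurability needed to invoke it. I would therefore expect the written proof to be only two or three lines, with perhaps one sentence flagging measurability and non-negativity.
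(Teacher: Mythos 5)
Your proof is correct and follows the same route as the paper's: unfold the Dirichlet form, insert the mixture representation of $\overline P_h$, and interchange the order of integration. The paper performs this as a one-line "direct calculation" without explicitly invoking Tonelli or checking measurability, so your version is simply a more careful write-up of the identical argument.
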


\begin{proof}
Direct calculation gives

\begin{align*}
2\mathcal{E}(\overline P_h,f) 
&= 
\int (f(y)-f(x))^2\pi(dx)\overline P_h(x,dy)
\\
&=
\int \int (f(y)-f(x))^2\pi(dx)P_{hz}(x,dy)\mu(dz)
\\
&=
\int 2\mathcal{E}(P_{hz},f)\mu(dz)
\end{align*}
as required.
\end{proof}

We can combine Lemma \ref{lem:dirichlet_form_aux} with Assumption \ref{ass:p_has_a_spectral_gap}(i) to conclude a weak Poincar\'{e} inequality and Assumption \ref{ass:p_has_a_spectral_gap}(ii) to conclude a positive spectral gap for $\overline P_h$ provided that $\mu$ gives positive mass to the set $H/h := \{ z \in \mathbb{R}: hz \in H\}$, as stated below.

\begin{proposition} \label{prop:spectral_gap_random_vs_base}
    If Assumption \ref{ass:p_has_a_spectral_gap}(i) holds and $\mu(H/h) > 0$ then $\overline P_h$ satisfies a weak Poincar\'{e} inequality of the form
    \[
    \textup{Var}_\pi(f) \leq c_*(h,\mu)\mathcal{E}(\overline P_h, f) + \beta_*(h,\mu)\Phi(f)
    \]
    for any $f \in L^2_0(\pi)$, where
    \[
    c_*(h,\mu)^{-1} := \int_{H/h} \frac{1}{c(hz)}\mu(dz), \quad
    \beta_*(h,\mu) := c_*(h,\mu) \cdot \int_{H/h} \frac{\beta(c(hz))}{c(hz)}\mu(dz).
    \]
    In the case of Assumption \ref{ass:p_has_a_spectral_gap}(ii) then combining with Proposition \ref{prop:ordering_margvsaux} implies
    \[
    \textup{Gap}(M_h) \geq \textup{Gap}(\overline P_h) \geq \frac{1}{c_*(h,\mu)}.
    \]
\end{proposition}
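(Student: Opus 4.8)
The plan is to combine Lemma~\ref{lem:dirichlet_form_aux} with the (weak) Poincar\'{e} inequalities available for $P_{hz}$ whenever $hz\in H$. Fix $f\in L^2_0(\pi)$. For each $z\in H/h$, Assumption~\ref{ass:p_has_a_spectral_gap}(i) gives $\mathrm{Var}_\pi(f)\le c(hz)\mathcal{E}(P_{hz},f)+\beta(c(hz))\Phi(f)$, which I would rearrange into the pointwise lower bound
\[
\mathcal{E}(P_{hz},f)\;\ge\;\frac{\mathrm{Var}_\pi(f)}{c(hz)}-\frac{\beta(c(hz))}{c(hz)}\,\Phi(f),\qquad z\in H/h.
\]
Since Dirichlet forms are non-negative, Lemma~\ref{lem:dirichlet_form_aux} gives $\mathcal{E}(\overline P_h,f)=\int_0^\infty\mathcal{E}(P_{hz},f)\,\mu(dz)\ge\int_{z\in H/h}\mathcal{E}(P_{hz},f)\,\mu(dz)$, and inserting the pointwise bound and pulling the $z$-independent quantities $\mathrm{Var}_\pi(f)$ and $\Phi(f)$ out of the integral yields
\[
\mathcal{E}(\overline P_h,f)\;\ge\;\mathrm{Var}_\pi(f)\int_{z\in H/h}\frac{\mu(dz)}{c(hz)}-\Phi(f)\int_{z\in H/h}\frac{\beta(c(hz))}{c(hz)}\,\mu(dz).
\]
Identifying the two integrals with $c_*(h,\mu)^{-1}$ and $\beta_*(h,\mu)/c_*(h,\mu)$ respectively and rearranging then gives exactly the asserted weak Poincar\'{e} inequality for $\overline P_h$.

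To finish, I would note that $c_*(h,\mu)^{-1}=\int_{z\in H/h}c(hz)^{-1}\mu(dz)$ is strictly positive, since $c(hz)^{-1}>0$ on $H/h$ and $\mu(H/h)>0$, and finite (automatically so under Assumption~\ref{ass:p_has_a_spectral_gap}(ii), and a mild requirement under (i)), so $c_*(h,\mu)$ is a genuine positive constant. Under Assumption~\ref{ass:p_has_a_spectral_gap}(ii) we have $\beta\equiv0$, hence $\beta_*(h,\mu)=0$, and the displayed inequality becomes a genuine Poincar\'{e} inequality, giving $\mathrm{Gap}(\overline P_h)\ge 1/c_*(h,\mu)$; combining this with the ordering $\mathrm{Gap}(M_h)\ge\mathrm{Gap}(\overline P_h)$ from Proposition~\ref{prop:ordering_margvsaux} produces the full chain of spectral gap bounds.

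The argument is essentially one averaging step once Lemma~\ref{lem:dirichlet_form_aux} is in hand, so I do not anticipate a substantive obstacle. The only points requiring care are the restriction of the integration domain to $H/h$ --- legitimate precisely because non-negativity of Dirichlet forms means discarding the complement only weakens the lower bound --- and, in the weak-Poincar\'{e} case, verifying that the resulting pair $(\beta_*,\Phi)$ still satisfies the admissibility conditions attached to the inequality \eqref{eq:wpi}; given the explicit formula for $\beta_*$ the latter is routine and inherited from the properties of $\beta$.
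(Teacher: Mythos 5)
Your argument is correct and follows essentially the same route as the paper's proof: rearrange the weak Poincar\'{e} inequality for $P_{hz}$ into a lower bound on $\mathcal{E}(P_{hz},f)$, integrate over $z \in H/h$ using Lemma~\ref{lem:dirichlet_form_aux} together with non-negativity of Dirichlet forms on the complement, and rearrange. Your additional remarks on the positivity and finiteness of $c_*(h,\mu)$ and on admissibility of $(\beta_*,\Phi)$ are sensible refinements but do not change the substance of the argument.
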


\begin{proof}
The weak Poincar\'{e} inequality can be re-written
\[
\mathcal{E}(P_h,f) \geq c(h)^{-1}\left[ \text{Var}_\pi(f) - \beta(c(h))\Phi(f)\right].
\]
Combining Lemma \ref{lem:dirichlet_form_aux} and Assumption \ref{ass:p_has_a_spectral_gap}(i) therefore gives for any $f \in L_0^2(\pi)$
\begin{align*}
\mathcal{E}(\overline P_{h}, f)
&\geq
\int_{H/h} \mathcal{E}(P_{hz}, f) \mu(dz)
\\
&\geq
\textup{Var}_\pi(f)\int_{H/h}\frac{1}{c(hz)}\mu(dz) - \Phi(f)\int_{H/h} \frac{\beta(c(hz))}{c(hz)}\mu(dz),
\end{align*}
from which upon re-arrangement the first part of the result follows.  Setting $\Phi(f) \equiv 0$ as in Assumption \ref{ass:p_has_a_spectral_gap}(ii) then implies $\textup{Gap}(\overline P_h) \geq 1/c_*(h,\mu)$, from which the second part follows.
\end{proof}

The following immediate corollary highlights the additional robustness that can be offered when randomizing the step size from a suitable $\mu$ using either the marginal or auxiliary-variable approach.
\begin{corollary}\label{cor:cor_spectral_gap_bound}
    If Assumption \ref{ass:p_has_a_spectral_gap}(ii) holds for some $H$ of positive Lebesgue measure on $\mathbb{R}^+$, and the density $\mu(z) > 0$ for all $z >0$, then for any $h >0$,
    \[
    \textup{Gap}(M_h) \geq \textup{Gap}(\overline P_h) > 0.
    \]
\end{corollary}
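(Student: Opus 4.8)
The plan is to verify the hypotheses of Proposition~\ref{prop:spectral_gap_random_vs_base}(ii) for every $h>0$ and then argue that the Poincar\'{e} constant $c_*(h,\mu)$ produced there is finite, so that the lower bound it gives is strictly positive; the ordering in Proposition~\ref{prop:ordering_margvsaux} then transfers the conclusion from $\overline P_h$ to $M_h$.

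First I would check the measure condition $\mu(H/h)>0$. Since $H$ has positive Lebesgue measure and $h>0$, the dilation $H/h = \{z : hz \in H\}$ has Lebesgue measure $\mathrm{Leb}(H)/h > 0$. Because the density satisfies $\mu(z)>0$ for all $z$ in the relevant range, it follows that $\mu(H/h)=\int_{H/h}\mu(z)\,dz>0$. This is precisely the hypothesis needed to invoke Proposition~\ref{prop:spectral_gap_random_vs_base}, which under Assumption~\ref{ass:p_has_a_spectral_gap}(ii) gives $\mathrm{Gap}(M_h) \ge \mathrm{Gap}(\overline P_h) \ge 1/c_*(h,\mu)$ with $c_*(h,\mu)^{-1} = \int_{z\in H/h} c(hz)^{-1}\,\mu(dz)$.

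It then remains to show $c_*(h,\mu)^{-1}>0$. For every $z\in H/h$ we have $hz\in H$, so by Assumption~\ref{ass:p_has_a_spectral_gap}(ii) the constant $c(hz)$ lies in $(0,\infty)$ and hence the integrand $c(hz)^{-1}\mu(z)$ is strictly positive. A strictly positive function integrated over a set of positive Lebesgue measure yields a strictly positive value, so $c_*(h,\mu)^{-1}>0$, i.e. $\mathrm{Gap}(\overline P_h)\ge 1/c_*(h,\mu)>0$, and the displayed chain of inequalities completes the argument.

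There is no real obstacle here; the only mild subtlety is the possibility that $\int_{z\in H/h} c(hz)^{-1}\,\mu(dz)=+\infty$, which would only strengthen the bound but leaves the stated finite Poincar\'{e} constant undefined. If one wants a genuinely finite constant, one writes $H=\bigcup_{n\ge 1}\{h'\in H : c(h')\le n\}$, observes that at least one of these sets has positive Lebesgue measure, and replaces $H$ by that set; it still satisfies Assumption~\ref{ass:p_has_a_spectral_gap}(ii), the integrand is now bounded, and the integral is finite and positive, giving the same conclusion $\mathrm{Gap}(M_h)\ge\mathrm{Gap}(\overline P_h)>0$.
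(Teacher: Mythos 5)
Your proof is correct and is exactly the argument the paper intends, since it presents the corollary as an immediate consequence of Propositions~\ref{prop:ordering_margvsaux} and \ref{prop:spectral_gap_random_vs_base}: positivity of $\mu$ on the dilated set $H/h$ gives $\mu(H/h)>0$, and strict positivity of the integrand $c(hz)^{-1}\mu(z)$ on a set of positive measure gives $c_*(h,\mu)^{-1}>0$. Your additional remark about handling a possibly infinite integral by restricting to a sublevel set of $c$ is a sensible refinement the paper does not spell out, but it does not change the route.
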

\begin{example}
    When $\pi$ is Gaussian it is well-known that MALA and Hamiltonian Monte Carlo can only be shown to have a positive spectral gap if $h$ is chosen to be suitably small \citep{RobertsTweedie::1996, livingstone2019geometric, durmus2020irreducibility}. By contrast both $M_h$ and $\overline P_h$ will have a positive spectral gap for any choice of $h$ provided that the randomizing distribution $\mu$ has positive density in a neighbourhood of zero.
\end{example}

\begin{remark}[Non-asymptotic $L^p$ error bounds]
Proposition~\ref{prop:spectral_gap_random_vs_base} can be combined with results in \citet{rudolf2011explicit, hofstadler2025optimal} to derive explicit non-asymptotic $L^p$ error bounds for both Auxiliary-variable and Marginalized Markov chains. In particular, suppose that Assumption~\ref{ass:p_has_a_spectral_gap} (ii) holds, then $c^{-1}_*(h,\mu)$ is a lower bound for both $\mathrm{Gap}(\bar P_h)$ and $\mathrm{Gap}(\bar M_h)$ and the $L^{p}$ error bounds in  \citet[][Theorem 2.1]{hofstadler2025optimal} and \citet[][Theorem 3.41]{rudolf2011explicit} can be directly applied.
\end{remark}

Proposition \ref{prop:spectral_gap_random_vs_base} already illustrates how the spectral gaps of $M_h$ and $\overline P_h$ can be more stable as $h\uparrow \infty$ than that of $P_h$.  To see this, suppose that Assumption \ref{ass:p_has_a_spectral_gap}(ii) is satisfied with $c(h) := e^{h}$, meaning the spectral gap of $P_h$ decays to zero at an exponential rate as $h$ increases.  Choosing $\mu$ to be a half-Normal distribution with $\sigma^2 = 1$, then $c_*(h,\mu)^{-1} = 2e^{h^2/2}(1-\Phi(h))$, where $\Phi$ is the standard Normal cumulative distribution function.  For large $h$ then $c_*(h,\mu) \sim h \sqrt{\pi/2}$, meaning that the spectral gaps of $M_h$ and $\overline P_h$ decay at a slower polynomial rate as $h \uparrow \infty$.  In the next result we show that this robustness to tuning holds more generally for the randomized kernels.  We show below that for any Metropolis--Hastings algorithm $P_h$ indexed by some one-dimensional parameter $h$, the act of randomizing $h$ by using either $M_h$ or $\overline P_h$ imposes a level of robustness, even if this is not the case for the original kernel $P_h$.  We begin with a weak ordering of the Dirichlet forms.

\begin{proposition} \label{prop:dirichlet_aux}
For any $f \in L^2_0(\pi)$ and any $h > 1$, suppose $\mu$ satisfies Assumption \ref{ass:mu_positive_at_zero}, then 
\[
\mathcal{E}(\overline P_h, f) \geq w(h)\mathcal{E}(\overline P_1,f), \quad w(h)^{-1} = O(h).
\]
\end{proposition}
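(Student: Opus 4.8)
The plan is to reduce everything to Lemma~\ref{lem:dirichlet_form_aux} and then exploit Assumption~\ref{ass:mu_positive_at_zero} through a change of variables. By the lemma, $\mathcal{E}(\overline P_h, f) = \int_0^\infty \mathcal{E}(P_{hz}, f)\mu(z)\,dz$ and $\mathcal{E}(\overline P_1, f) = \int_0^\infty \mathcal{E}(P_z, f)\mu(z)\,dz$. The key observation is that both Dirichlet forms are built from the same one-parameter family $z \mapsto \mathcal{E}(P_z, f)$, only weighted by $\mu$ evaluated at different scales; so a pointwise comparison of the two weight functions along the common integrand will suffice, and in particular no structural assumption on the target or on $P_h$ is needed.

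Concretely, I would substitute $u = hz$ in the integral representing $\mathcal{E}(\overline P_h, f)$, which (since $h>0$ maps $(0,\infty)$ bijectively onto itself) gives $\mathcal{E}(\overline P_h, f) = h^{-1}\int_0^\infty \mathcal{E}(P_u, f)\,\mu(u/h)\,du$. Because $h > 1$, Assumption~\ref{ass:mu_positive_at_zero} applied with its variable $z$ set equal to $u$ yields $\mu(u/h) \geq C\mu(u)$ for every $u > 0$. Dirichlet forms are non-negative, so $\mathcal{E}(P_u, f) \geq 0$ for all $u$, and the scalar bound on the weights therefore passes through the integral: $\mathcal{E}(\overline P_h, f) \geq (C/h)\int_0^\infty \mathcal{E}(P_u, f)\,\mu(u)\,du = (C/h)\,\mathcal{E}(\overline P_1, f)$. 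This is exactly the claimed inequality with $w(h) := C/h$, for which $w(h)^{-1} = h/C = O(h)$.

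There is no genuine obstacle here once Lemma~\ref{lem:dirichlet_form_aux} is available; the only points deserving care are that the change of variables is legitimate and that non-negativity of the integrand is what allows the pointwise weight inequality $\mu(u/h)\geq C\mu(u)$ to be integrated. If one wants to be slightly more careful, one should also note that the integrals are well defined (possibly $+\infty$) so the inequality chain makes sense even when $\mathcal{E}(\overline P_1,f) = \infty$, in which case the conclusion is vacuous but still valid. It is worth remarking that $C$ is precisely the constant from Assumption~\ref{ass:mu_positive_at_zero}, so the rate $w(h)^{-1} = O(h)$ is sharp in $h$ up to that constant.
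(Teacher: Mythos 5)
Your argument is correct and follows essentially the same route as the paper: a change of variables $u = hz$ in the mixture representation of $\mathcal{E}(\overline P_h,f)$ (which the paper carries out directly inside the double integral rather than via Lemma~\ref{lem:dirichlet_form_aux}, but this is the same computation), followed by the pointwise bound $\mu(u/h)\ge C\mu(u)$ from Assumption~\ref{ass:mu_positive_at_zero} and non-negativity of the integrand. Nothing further is needed.
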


\begin{proof}
Direct calculation and using Assumption \ref{ass:mu_positive_at_zero} gives
\begin{align*}
2\mathcal{E}(\overline{P}_{h},f) 
&=
\int (f(x)-f(y))^2\int_0^{\infty} \mu(z)P_{hz}(x,d y) d z \, \pi(d x)
\\
&=
\int (f(x)-f(y))^2  \int_0^{\infty} \mu\left(\frac{u}{h}\right)P_{u}(x,d y)\frac{1}{h} d u \, \pi(d x)
\\
&\ge
\int (f(x)-f(y))^2  \int_0^{\infty} C\mu(u)P_{u}(x, d y)\frac{1}{h} d u \,\pi(d x)
\\
&=
2\mathcal{E}(\overline{P}_1, f)\times \frac{C}{h}.
\end{align*}

Setting $w(h) := C/h$ therefore establishes the result.
\end{proof}

Combining Proposition~\ref{prop:dirichlet_aux} with Proposition~\ref{prop:ordering_margvsaux} leads to the following, which is the main result of this section.

\begin{theorem} \label{thm: spectral gap decay}
If $\textup{Gap}(\overline P_1) > 0$ and $\mu$ satisfies Assumption \ref{ass:mu_positive_at_zero} then
\[
\textup{Gap}(M_h) = \Omega(h^{-1}), \quad \textup{Gap}(\overline P_h) = \Omega(h^{-1}),
\]
where $f = \Omega(g) \iff \lim \inf_{t\to\infty}f(t)/g(t) >0$ for $f,g:[0,\infty) \to [0,\infty)$.
\end{theorem}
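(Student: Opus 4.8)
The plan is to obtain the result as an essentially immediate consequence of the two preceding propositions, the key observation being that the comparison factor $w(h)$ appearing in \Cref{prop:dirichlet_aux} is independent of the test function $f$ and therefore passes through the infimum that defines the spectral gap.

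First I would fix $h > 1$ and note that, since $w(h) = C/h > 0$ does not depend on $f$, \Cref{prop:dirichlet_aux} gives for every $f \in L^2_0(\pi)$
\[
\frac{\mathcal{E}(\overline P_h, f)}{\textup{Var}_\pi(f)} \;\geq\; w(h)\,\frac{\mathcal{E}(\overline P_1, f)}{\textup{Var}_\pi(f)}.
\]
Taking the infimum over $f \in L^2_0(\pi)$ on both sides yields $\textup{Gap}(\overline P_h) \geq w(h)\,\textup{Gap}(\overline P_1) = (C/h)\,\textup{Gap}(\overline P_1)$. Since $\textup{Gap}(\overline P_1) > 0$ by hypothesis, multiplying through by $h$ and taking $\liminf_{h\to\infty}$ gives $\liminf_{h\to\infty} h\,\textup{Gap}(\overline P_h) \geq C\,\textup{Gap}(\overline P_1) > 0$, which is precisely $\textup{Gap}(\overline P_h) = \Omega(h^{-1})$. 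For the marginalized kernel I would then invoke \Cref{prop:ordering_margvsaux}, which gives $\textup{Gap}(M_h) \geq \textup{Gap}(\overline P_h)$ for every $h > 0$; chaining this with the bound just established yields $\textup{Gap}(M_h) = \Omega(h^{-1})$ as well.

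There is no genuine obstacle at this stage: the only point deserving a word of care is that $\Omega$ is defined via a $\liminf$ as $h \to \infty$, so it suffices to have the bound in the regime $h > 1$, which is exactly what \Cref{prop:dirichlet_aux} supplies. All the substantive work has already been carried out in proving \Cref{prop:dirichlet_aux} — where \Cref{ass:mu_positive_at_zero} is used to control the mass of $\mu$ near zero after the change of variables $u = hz$ — and in \Cref{prop:ordering_margvsaux}, which rests on the auxiliary-variable comparison of \cite{titsias2018auxiliary}; the theorem itself is just the assembly of these two ingredients.
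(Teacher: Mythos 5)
Your argument is correct and is essentially identical to the paper's own proof, which likewise deduces $\textup{Gap}(\overline P_h) \ge w(h)\,\textup{Gap}(\overline P_1)$ from \Cref{prop:dirichlet_aux} and then transfers the bound to $M_h$ via \Cref{prop:ordering_margvsaux}. You simply spell out the (trivially valid) step of passing the $f$-independent factor $w(h)=C/h$ through the infimum defining the gap, which the paper leaves implicit.
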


\begin{proof}
From Propositioin~\ref{prop:dirichlet_aux} and the assumption that $\textup{Gap}(\overline P_1) > 0$ the result follows for $\overline P_h$. Given that $\textup{Gap}(M_h) \geq \textup{Gap}(\overline P_h)$ by Proposition~\ref{prop:ordering_margvsaux} then this also implies the result for $M_h$.
\end{proof}

\section{Scaling limits}
\label{subsec:scaling_limits}

Scaling limits are classical tools used to understand how the step size $h$ must be chosen as the dimension $d$ of the target distribution $\pi$ increases to prevent the acceptance rate from degenerating. These results 
give the algorithmic dependence on $d$ and can be used to optimize the average acceptance rate of the algorithm, providing practical guidance for tuning $h$ in high dimensions. 

The general framework makes some regularity assumptions on a sequence of target distributions $\pi_d$ on $\RR^d$  (e.g. being of the product form, although generalizations are possible). Under these assumptions, optimal scaling results show that there exists a constant $c > 0$ such that for all $\ell > 0$, $X \sim \pi_d$, $Y \sim Q_h(X, \cdot)$ and $h_d = \ell d^{-c}$, as $d \to \infty$,
\begin{equation}
    \label{eq:alpha scaling limits}
    a(d, P_{h_d}) := \EE(\alpha(X,Y)) \to a(\ell) := 2 \, \Phi\left(- \kappa \, \ell^{1/(2c)}/2\right)
\end{equation}
where $\kappa>0$ is a constant whose expression depends on the algorithm. The limit in \eqref{eq:alpha scaling limits} identifies the appropriate scaling of the step size $h$ such that the average acceptance rate (i.e. the probability that  the Markov chain changes its value in one step) has a  non-trivial limit in $[0,1]$.  Furthermore, for any two increments of the Markov chain $X_i \sim \pi_d$, $X_{i+1} \sim P_{h_d}(X_i, \,\cdot)$ and the same constants $c, h_d$ as above, as $d \to \infty$, 
\begin{equation}
    \label{eq:ESJD}
 \mathrm{ESJD}(d, P_{h_d}) := d^{c} \, \EE\left[\left(X^{(1)}_i - X^{(1)}_{i+1}\right)^2\right] \to {\rm eff}(\ell) 
 :=  \ell \, a(\ell)
\end{equation}
 where $X^{(j)}$ is the $j$th element of $X$; see \cite{Roberts::2001} for an overview. The quantity on the left hand side of \eqref{eq:ESJD} is the (scaled) expected squared jumping distance (ESJD) of the first coordinate. Maximizing ESJD corresponds to minimizing the 1-lag autocorrelation function of the Markov chain $(X^{(1)}_i)_{i=1,2,\dots},$ and to reducing the asymptotic variance of ergodic  averages in the high-dimensional limit,  see for example \cite{rosenthal2003asymptotic}. Furthermore, the right-hand side of \eqref{eq:ESJD} also corresponds to the speed (or diffusivity) of the Langevin diffusion process obtained as a high-dimensional limit of the (appropriately rescaled) original Markov chain, see \citet{Roberts::1997, Roberts::1998} for details.

If the limits in \eqref{eq:alpha scaling limits} and \eqref{eq:ESJD} hold,  one can optimize the function ${\rm eff}(\ell)$  and find the optimal asymptotic acceptance rate as follows. Letting $u=\ell^{1/(2c)} \kappa/2$, the optimal $\ell$ maximizing the asymptotic ESJD satisfies
\begin{align}
    \frac{\partial}{\partial \ell} \eff(\ell)=0 &\Leftrightarrow2 \Phi(-\ell^{1/(2c)}\kappa/2)- \frac{\kappa}{2c}  \ell^{1/(2c)} \phi(-\ell^{1/(2c)}\kappa/2)=0 \nonumber\\
    &\Leftrightarrow \Phi(-u) - \frac{1}{2c} u \phi(-u)=0 \label{eq:opt_rate_2}
\end{align}
where $\phi$ denotes the standard Gaussian density in one dimension. One can then find numerically $\ell_{\rm opt}$ by solving \eqref{eq:opt_rate_2} and compute the optimal asymptotic acceptance rate $a(\ell_{\rm opt})$.

Classical scaling limits have been established for random walk Metropolis with $c = 1 $ and $a(\ell_{\rm opt}) = 0.234$ \citep{Roberts::1997}, Metropolis-adjusted Langevin algorithm (MALA) and the Barker proposal with $c = 1/3$ and $a(\ell_{\rm opt}) = 0.574$ \citep{Roberts::1998, vogrinc2023optimal} and for HMC and Metropolis-adjusted Langevin trajectories with $c = 1/4$ and $a(\ell_{\rm opt}) = 0.651$ \citep{Beskos::2013, riou2022metropolis}.

Turning now  to our randomized algorithms. Our main result below shows that if a scaling limit holds true for a Markov kernel $P_h$, then similar limits hold for our Auxiliary-variable Markov kernel $\overline P_h$ in Algorithm~\ref{alg:aux_mh}.

\begin{theorem}[Scaling limits for Auxiliary-variable kernels] \label{thm:scaling_auxvar kernles} Suppose that for some sequence of kernels $P_{h_d}$, the limits \eqref{eq:alpha scaling limits}  and \eqref{eq:ESJD}  hold with $h_d = \ell d^{-c},$ for some $c >0$. Then, as $d \to \infty$, 
\[
a(d, \overline P_{h_d})\to \overline a(\ell):= \int_0^\infty a(\ell z)\mu(z)d z.
\]
Moreover, setting $f_d(z) := \mathrm{ESJD}(d, P_{h_d z})$, assuming that the sequence $f_1(z),f_2(z),...$ is uniformly $\mu$-integrable, then
$$
\mathrm{ESJD}(d, \overline P_{h_d}) \to \overline{\rm eff}(\ell):= \int_0^\infty {\rm eff}(\ell z)\mu(z)d z.
$$
\end{theorem}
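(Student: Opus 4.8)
The plan is to establish both limits by conditioning on the auxiliary step-size variable $z\sim\mu$ and then applying a dominated/uniform-integrability argument to interchange the limit in $d$ with the integral against $\mu$. The key structural observation is that the auxiliary-variable kernel $\overline P_{h_d}$ is exactly the $\mu$-mixture of the base kernels $P_{h_d z}$, so that if $X\sim\pi_d$, $z\sim\mu$ and $Y\sim Q_{h_d z}(X,\cdot)$, then by the tower property $a(d,\overline P_{h_d}) = \EE[\alpha_{h_d z}(X,Y)] = \int_0^\infty \EE[\alpha_{h_d z}(X,Y)\mid z]\,\mu(dz) = \int_0^\infty a(d, P_{h_d z})\,\mu(dz)$.

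For the acceptance-rate limit, I would first note that for each fixed $z>0$ we may write $h_d z = (\ell z) d^{-c}$, so the hypothesis \eqref{eq:alpha scaling limits} applied with $\ell$ replaced by $\ell z$ gives $a(d,P_{h_d z}) \to a(\ell z)$ pointwise in $z$ as $d\to\infty$. Since acceptance probabilities are bounded in $[0,1]$ and $\mu$ is a probability measure, the dominated convergence theorem immediately yields $a(d,\overline P_{h_d}) = \int_0^\infty a(d,P_{h_d z})\,\mu(dz) \to \int_0^\infty a(\ell z)\,\mu(dz) = \overline a(\ell)$. No integrability hypothesis beyond $\mu$ being a probability measure is needed here, which is why the theorem statement only imposes the extra condition for the ESJD part.

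For the ESJD limit the same conditioning works: writing $X_i\sim\pi_d$, $z\sim\mu$, $X_{i+1}\sim P_{h_d z}(X_i,\cdot)$, the first coordinate satisfies $\mathrm{ESJD}(d,\overline P_{h_d}) = d^c\,\EE[(X_i^{(1)}-X_{i+1}^{(1)})^2] = \int_0^\infty d^c\,\EE[(X_i^{(1)}-X_{i+1}^{(1)})^2\mid z]\,\mu(dz) = \int_0^\infty f_d(z)\,\mu(dz)$, where $f_d(z) := \mathrm{ESJD}(d,P_{h_d z})$. Again using $h_d z = (\ell z)d^{-c}$, the hypothesis \eqref{eq:ESJD} gives $f_d(z)\to \mathrm{eff}(\ell z)$ pointwise in $z$. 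Unlike the acceptance probability, the rescaled squared jumping distance is not bounded, so to pass the limit inside the integral I would invoke the assumed uniform $\mu$-integrability of the sequence $(f_d)_{d\ge1}$, which combined with pointwise convergence yields convergence in $L^1(\mu)$ and hence $\int_0^\infty f_d(z)\,\mu(dz)\to\int_0^\infty\mathrm{eff}(\ell z)\,\mu(dz) = \overline{\mathrm{eff}}(\ell)$.

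The only genuine subtlety — and the place a careful write-up must be explicit — is the pointwise-convergence step for fixed $z$: the hypothesis is stated for sequences of the form $\ell' d^{-c}$ with $\ell'>0$ fixed, and one must check it applies verbatim with $\ell' = \ell z$ for $\mu$-almost every $z$ (in particular on the support of $\mu$, which lies in $(0,\infty)$ by assumption), so that the scaling exponent $c$ is unchanged and only the multiplicative constant $\ell$ is rescaled to $\ell z$. Everything else is an application of dominated convergence (for $a$) and the stated uniform integrability (for the ESJD); I expect no further obstacle.
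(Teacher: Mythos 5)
Your proposal is correct and follows essentially the same route as the paper's proof: decompose $a(d,\overline P_{h_d})$ and $\mathrm{ESJD}(d,\overline P_{h_d})$ as $\mu$-mixtures of the corresponding quantities for $P_{h_d z}$, apply the hypothesized scaling limit with $\ell$ replaced by $\ell z$ to get pointwise convergence, then use bounded convergence for the acceptance rate and uniform $\mu$-integrability (Vitali) for the ESJD. Your explicit remark about checking that the hypothesis applies with the rescaled constant $\ell z$ for $\mu$-almost every $z$ is a point the paper passes over silently, but it changes nothing of substance.
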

\begin{proof}
Let $X \sim \pi_d$, $Y \sim Q_{h_d z}(X, \cdot)$. We can use previous optimal scaling results in \eqref{eq:alpha scaling limits}-\eqref{eq:ESJD}, to claim that for any $z>0$, as $d\rightarrow\infty$, we have the point-wise limits
$$
\EE (\alpha(X, Y)) \rightarrow a(\ell z), \qquad \mathrm{ESJD}(d,P_{h_d z}) \rightarrow  {\rm eff}(\ell z).
$$ 
By taking expectations with respect to $z$ the first part of the theorem is proven by applying the bounded convergence theorem.  For the second part note that $\mathrm{ESJD}(d,\overline P_{h_d}) = \int \mathrm{ESJD}(d,P_{h_d z})\mu(dz) = \int f_d(z)\mu(dz)$, so uniform $\mu$-integrability of the sequence $f_1, f_2, ...$ implies the result by Vitali convergence.
\end{proof}

The uniform integrability condition can be satisfied for many algorithms of interest.  We give some examples in the below proposition.

\begin{proposition}
    The sequence $f_d(z) := \mathrm{ESJD}(d,P_{h_d z})$ is uniformly integrable in the following cases:
    \begin{enumerate}
        \item[(i)] $P_{h_d z}$ is the Gaussian random walk Metropolis kernel and $\mu$ has finite mean
        \item[(ii)] $P_{h_d z}$ is the MALA kernel, $\mu$ has finite second moment, and $\pi_d(x) := \prod_{i=1}^d g(x_i)$ with the sequence $\mathbb{E}_{\pi_d}[(\log g)'(X_1)^2]$ uniformly bounded.
    \end{enumerate}
\end{proposition}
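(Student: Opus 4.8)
The plan is to establish uniform $\mu$-integrability of each sequence $(f_d)_{d\ge 1}$ by exhibiting, in each case, a single $\mu$-integrable envelope $F$ with $0 \le f_d(z) \le F(z)$ for all $d$ and all $z>0$; domination by an integrable function is well known to imply uniform integrability, since $\int_{\{f_d>K\}} f_d \, d\mu \le \int_{\{F>K\}} F\,d\mu \to 0$ as $K\to\infty$, uniformly in $d$, by dominated convergence. The reduction common to both parts is to discard the accept/reject indicator: writing $X_{i+1}=Y$ on acceptance and $X_{i+1}=X_i$ otherwise, we have $(X_i^{(1)} - X_{i+1}^{(1)})^2 = (X_i^{(1)}-Y^{(1)})^2 \mathbf{1}(\text{accept}) \le (X_i^{(1)}-Y^{(1)})^2$, so that $f_d(z) = d^c \EE[(X_i^{(1)}-X_{i+1}^{(1)})^2] \le d^c\,\EE[(X_i^{(1)}-Y^{(1)})^2]$ with $X_i\sim\pi_d$ and $Y\sim Q_{h_d z}(X_i,\cdot)$. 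It therefore suffices to bound the unadjusted expected squared jump of the first coordinate.

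For part (i), the Gaussian random walk Metropolis proposal makes $Y^{(1)}-X_i^{(1)}$ a centred Gaussian of variance proportional to $h_d z$ and independent of $X_i$, so $\EE[(X_i^{(1)}-Y^{(1)})^2] = \kappa_0 h_d z$ for a fixed constant $\kappa_0$. Since $c=1$ and $h_d=\ell d^{-c}$ we have $d^c h_d = \ell$, hence $f_d(z) \le \kappa_0\ell z$ for every $d$. The linear map $z\mapsto \kappa_0\ell z$ is $\mu$-integrable precisely when $\mu$ has finite mean, which is the hypothesis, and uniform integrability follows.

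For part (ii), the MALA proposal in \eqref{eq: mala kernel} gives, on the first coordinate, $Y^{(1)}-X_i^{(1)} = h_d z\,(\log g)'(X_i^{(1)}) + \sqrt{2 h_d z}\,\xi$ with $\xi$ a standard Gaussian independent of $X_i$; squaring and taking expectations (the cross term vanishes since $\xi$ is centred and independent of $X_i$) yields $\EE[(X_i^{(1)}-Y^{(1)})^2] = (h_d z)^2\,\EE_{\pi_d}[(\log g)'(X_1)^2] + 2 h_d z$. Multiplying by $d^c$ with $c=1/3$ and $h_d=\ell d^{-1/3}$ gives $f_d(z) \le \ell^2 d^{-1/3}\,\EE_{\pi_d}[(\log g)'(X_1)^2]\, z^2 + 2\ell z$, and bounding $d^{-1/3}\le 1$ together with the assumed uniform bound $B := \sup_d \EE_{\pi_d}[(\log g)'(X_1)^2] < \infty$ gives $f_d(z) \le \ell^2 B z^2 + 2\ell z =: F(z)$ for all $d$. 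The quadratic $F$ is $\mu$-integrable exactly when $\mu$ has finite second moment, as assumed, so the sequence is uniformly integrable.

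I do not expect a genuine obstacle here: the argument is essentially bookkeeping, the only real inputs being the explicit form of the proposal increments and, in case (ii), the assumed uniform bound on $\EE_{\pi_d}[(\log g)'(X_1)^2]$, which is precisely what renders the coefficient of $z^2$ independent of $d$. The points requiring care are matching the step-size normalisation and the exponent $c$ in each case so that the powers of $d$ cancel exactly, and confirming that throwing away the acceptance indicator does not lose uniform integrability — it does not, because the unadjusted squared jump already admits a $d$-uniform, $\mu$-integrable envelope.
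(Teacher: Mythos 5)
Your proposal is correct and follows essentially the same route as the paper: bound the ESJD by the unadjusted expected squared jump of the proposal, compute that explicitly from the Gaussian increment (RWM) or the drift-plus-noise decomposition (MALA), cancel the powers of $d$ using $h_d = \ell d^{-c}$, and exhibit a $d$-independent $\mu$-integrable envelope ($\ell z$ in case (i), a quadratic $C(z^2+z)$ in case (ii)). Your write-up is somewhat more explicit than the paper's about why dropping the acceptance indicator is harmless and why domination implies uniform integrability, but there is no substantive difference.
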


\begin{proof}
In the random walk case note that $\mathrm{ESJD}(d, P_{h_d z}) \leq dhz\mathbb{E}[\xi^2] = \ell z$, where $\xi \sim N(0, 1)$. A straightforward dominating function is therefore $u(z) := \ell z$, which is $\mu$-integrable if $\mu$ has finite mean.  In the MALA case the jumping distance satisfies
\[
\mathrm{ESJD}(d, P_{h_d z}) \leq d^{1/3} \left(h_d^2z^2\mathbb{E}[(\log g)'(X_1)^2] + 2h_d z \right) \leq C(z^2 + z)
\]
for large enough $C < \infty$ provided that $\mathbb{E}[(\log g)'(X_1)^2]$ is uniformly bounded. The dominating function $u(z):= C(z^2 + z)$ can therefore be chosen to establish the result provided that $\mu$ has finite second moment.
\end{proof}


Theorem~\ref{thm:scaling_auxvar kernles} implies that, by randomizing the step size, the algorithmic dependence relative to the dimension $d$ stays unchanged.

Moreover, similarly to \eqref{eq:opt_rate_2} and by a change of variable $u = z \ell (\kappa/2)^{2c}$, $\overline \ell_{\rm opt} = \arg \max \{ \overline{\rm eff}(\ell)\}$ now solves
$$
\int_0^\infty\left(u\Phi(-u^{1/(2c)})- \frac{1}{2c} u^{1/(2c) + 1} \phi(-u^{1/(2c)})\right)\mu(u/\omega)d u = 0.
$$
where $\omega = \overline \ell_{\rm opt} (\kappa/2)^{2c}$, yielding new optimal acceptance rates for  Auxiliary-variable Metropolis--Hastings kernels given by 
$$\overline{a}(\overline \ell_{\rm opt})=\int_0^\infty 2\Phi(-(\overline\ell_{\rm opt} z)^{1/(2c)}\kappa/2)\mu(z)d z= \frac{2}{\omega} \int_0^\infty \Phi(-u^{1/(2c)})\mu(u/\omega)d u,
$$
 see Table~\ref{tab:optimal acceptancce rejection} for $\mu$ equal to both Exponential and Uniform distributions and for MALA (with scale $c = 1/3$) and HMC (with scale $c = 1/4$). Note that these values are larger than the optimal acceptance rate of the original MALA and HMC algorithms. 
\begin{table}
  \centering
  \renewcommand{\arraystretch}{1.2}
  \begin{tabular}{|p{2cm}|c|c|c|c|}
    \hline
     & \multicolumn{2}{c|}{MALA ($c=1/3$)} & \multicolumn{2}{c|}{HMC ($c=1/4$)}\\
     \hline
     $\mu$ & $\overline{a}(\overline \ell_{\mathrm{opt}})$ & ${\rm eff} (\ell_{\rm opt})\,/\,\overline{\rm eff}(\overline \ell_{\rm opt})$ & $\overline{a}(\overline \ell_{\mathrm{opt}})$ & ${\rm eff} (\ell_{\rm opt})\,/\,\overline{\rm eff}(\overline \ell_{\rm opt})$\\
     \hline
    Uniform & 0.680  & 1.342 & 0.750 &  1.387\\
    \hline
    Exponential & 0.687 & 1.758 & 0.737 & 1.889\\ \hline
  \end{tabular}
  \caption{Optimal acceptance rate $\overline{a}(\ell_{\mathrm{opt}})$ with $\ell_{\rm opt} =  \arg \max_\ell \overline{\rm eff}(\ell)$ and loss of optimal efficiency $\max({\rm eff}(\ell))\,/\,\max(\overline{\rm eff}(\ell))$ for the Auxiliary-variable MALA and HMC with Uniform and Exponential randomized step sizes.}
\label{tab:optimal acceptancce rejection}
\end{table}

Note that for all $\ell>0$, 
$$
\overline{\rm eff}(\ell) = \int_0^\infty {\rm eff}(\ell z) \mu(z) dz \le  \int_0^\infty {\rm eff}(\ell_{\rm opt}) \mu(z) dz ={\rm eff}(\ell_{\rm opt}) 
$$
where $\ell_{\rm opt} = \arg \max \, {\rm eff}(\ell)$, therefore, the randomized algorithms have a loss of asymptotic efficiency when $\ell$ is optimally tuned.
We quantify this loss in Table~\ref{tab:optimal acceptancce rejection} by comparing the maximum of ${\rm eff}(\ell)$ and $\overline{\rm eff}(\ell)$ for the algorithms considered above.

    \begin{remark}
         Theorem~\ref{thm:scaling_auxvar kernles} explicitly connects scaling limits for Auxiliary-variable kernels $\overline{P_h}$ with those of $P_h$. The same argument cannot be made for Marginalized kernels $M_h$, which are classical Metropolis-Hastings algorithms with a different proposal $\overline Q_h$ as compared to the original $Q_h$.  Scaling limits for the Marginalized kernels must therefore be derived on a case-by-case basis.
    \end{remark}

\section{Numerical experiments}
\label{sec:experiments}

We illustrate our theoretical results and numerically compare our algorithms on standard benchmark target distributions: Neal's funnel \citep{neal2011mcmc} and Rosenbrock's banana distribution \citep{pagani2022n}. We  also numerically show the convergence of our algorithms with initialization in the tails and a Bayesian posterior given by a Poisson regression model. The code for reproducing all the experiments is available at \texttt{https://github.com/SebaGraz/RSS}. 

\subsection{ESJD}
We illustrate first the robustness property of our Auxiliary-variable and Marginalized MALA algorithms by comparing their ESJD with that of standard MALA as the step size $h$ varies. ESJD is estimated with $N = 10^6$ Monte Carlo samples and standard Rao–Blackwellization techniques. Figure~\ref{fig:illustraion} shows these results for different target distributions (Normal, Laplace, and Student-$t$). Note that when $h$ is small the performance is almost identical across all algorithms. For large $h$, however, the ESJD of the randomized algorithms decays more slowly than that of standard MALA, as supported by Theorem~\ref{thm: spectral gap decay}. Furthermore, as indicated by Proposition~\ref{prop:ordering_margvsaux}, Marginalized MALA slightly outperforms Auxiliary-variable MALA in two of the three cases when considering the exponential randomization of the step size. However, for Uniform randomized step sizes, the ESJD of Marginalized and Auxiliary-variable MALA appear indistinguishable.

\begin{figure}[h!]
\centering\includegraphics[width = \textwidth]{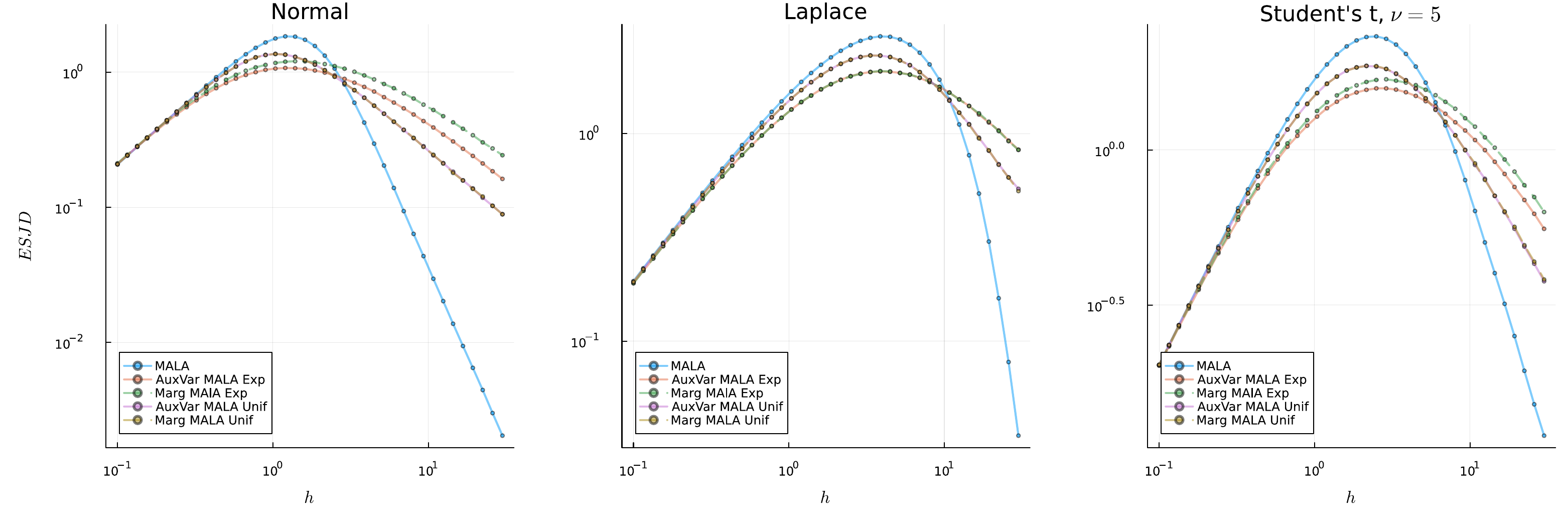}
    \caption{ESJD estimates (log-scale) as a function of the step size ($x$-axis, log-scale) of standard MALA (blue), Auxiliary-variable MALA with Exponential (red) and Uniform (purple) randomized step sizes and Marginalized MALA with Exponential (green) and Uniform (yellow) randomized step sizes, for a one-dimensional standard Normal (left), Laplace (center), Student-$t$ with $5$ degrees of freedom (right). }
    \label{fig:illustraion}
\end{figure}

\subsection{Neal's funnel distribution}\label{sec:neals funnel}
The $d$-dimensional Neal's funnel distribution \cite{neal2011mcmc} is a hierarchical model defined as
\begin{equation}\label{eq:neals_funnel}
X_1 \sim \mathcal{N}(0, \sigma^2), \quad X_i \mid X_1 \overset{\text{i.i.d}}{\sim} \mathcal{N}(0, e^{X_1}), \quad i = 2, \dots, d.
\end{equation}
for some $\sigma^2 >0$. The variable $X_1$ controls the variance of the other variables \( X_{2:d} \). In particular,  when $X_1$ takes large positive values, the variance of the remaining components becomes large and, conversely, for negative values, it becomes small, creating a \emph{funnel-shaped} geometry in the joint distribution, see Figure~\ref{fig:isodensity_targets}, left panel. This structure makes the distribution challenging for sampling algorithms, in particular for popular gradient-based methods such as MALA and HMC, due to the light tails in the $x_1$ direction and irregular geometry. 

\begin{figure}[h!]
    \centering
\includegraphics[width=0.95\linewidth]{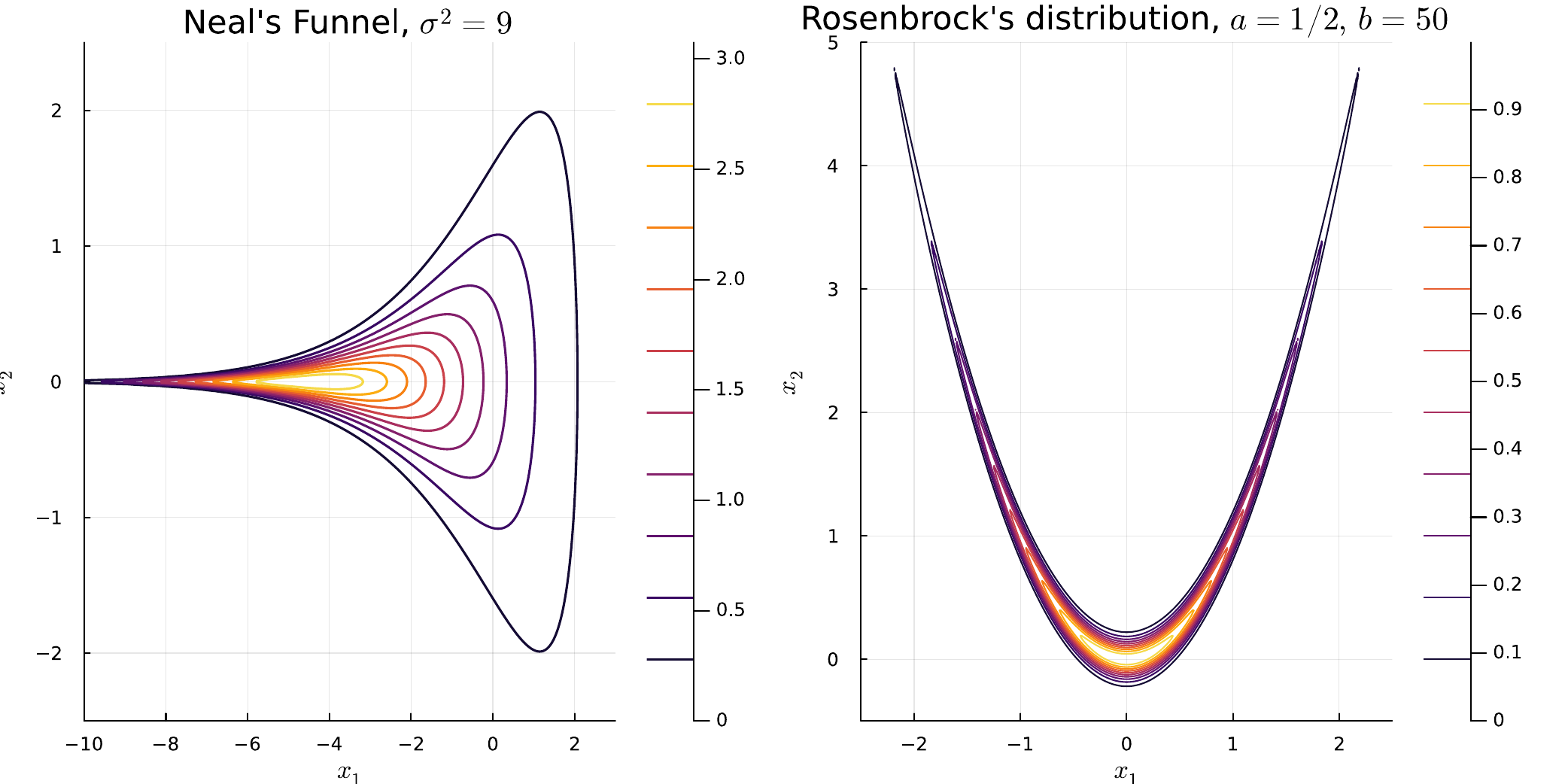}
    \caption{Contour plots of a 2 dimensional Neal's Funnel target with $\sigma^2 = 9$ (left panel) and a 2 dimensional Rosenbrock's banana distribution with $a = 1/2$, $b = 50$ (right panel).}
    \label{fig:isodensity_targets}
\end{figure}

In all experiments, the step size $h_i$  at iteration $i$ is adaptively tuned as in \cite{andrieu2008tutorial}, that is, at iteration $i$, the step size is updated as
\begin{equation}
    \label{eq: step-size adaptation}
    \log h_{i+1} = \log h_{i} + i^{-\beta}(\alpha_i - \alpha^\star)  
\end{equation}
where $\alpha_i$ is the acceptance probability computed at iteration $i$, $\beta = 0.6$ is a learning parameter and  $\alpha^\star \in (0,1)$ is the target probability which is set to be equal to the optimal acceptance rates derived in Section~\ref{subsec:scaling_limits}, Table~\ref{tab:optimal acceptancce rejection}. 
Notice that, during adaptation the results of Sections~\ref{sec:analysis}-\ref{subsec:scaling_limits} do not directly apply. The learning rate $i^{-\beta}$ decreasing quickly to $0$, however, and $h$ stabilizes around a fix value. Then the algorithm behaves as a fixed-$h$ randomized kernel, which is the object studied theoretically.

The left panel of Figure~\ref{fig:FUNNEL} shows the histograms and Q-Q plots of the first component $x_1$ obtained with $N = 10^6$ samples of standard MALA and Auxiliary-variable MALA with Uniform and Exponential randomized step sizes targeting a $d = 10$ Neal's Funnel in \eqref{eq:neals_funnel} with $\sigma^2 = 9$. For all Markov chains we set the same starting value $X_0 \sim \cN(0, I_{d\times d})$, a burn in equal to $N/10$ iterations and an initial step size equal to $1$. Standard MALA fails to explore the neck of the funnel, i.e. the left tail of the first component, but the randomized versions appear to explore this region better. 

The right panel of Figure~\ref{fig:FUNNEL} shows the estimation of $\PP(X_1 < \xi_{0.05})$, where $\xi_{0.05}$ is the 5th percentile, as the number of iterations of the algorithm increases. The box-plots are obtained with $20$ independent runs of the Markov chains targeting \eqref{eq:neals_funnel} with $\sigma^2 = 4$. Note that, for standard MALA, the upper and lower quantiles are equal to 0 in all experiments, meaning that the majority of the MCMC runs failed to  reach values below $\xi_{0.05}$ on their first component. 
\begin{figure}[h!]
    \centering
    \includegraphics[width=0.48\linewidth]{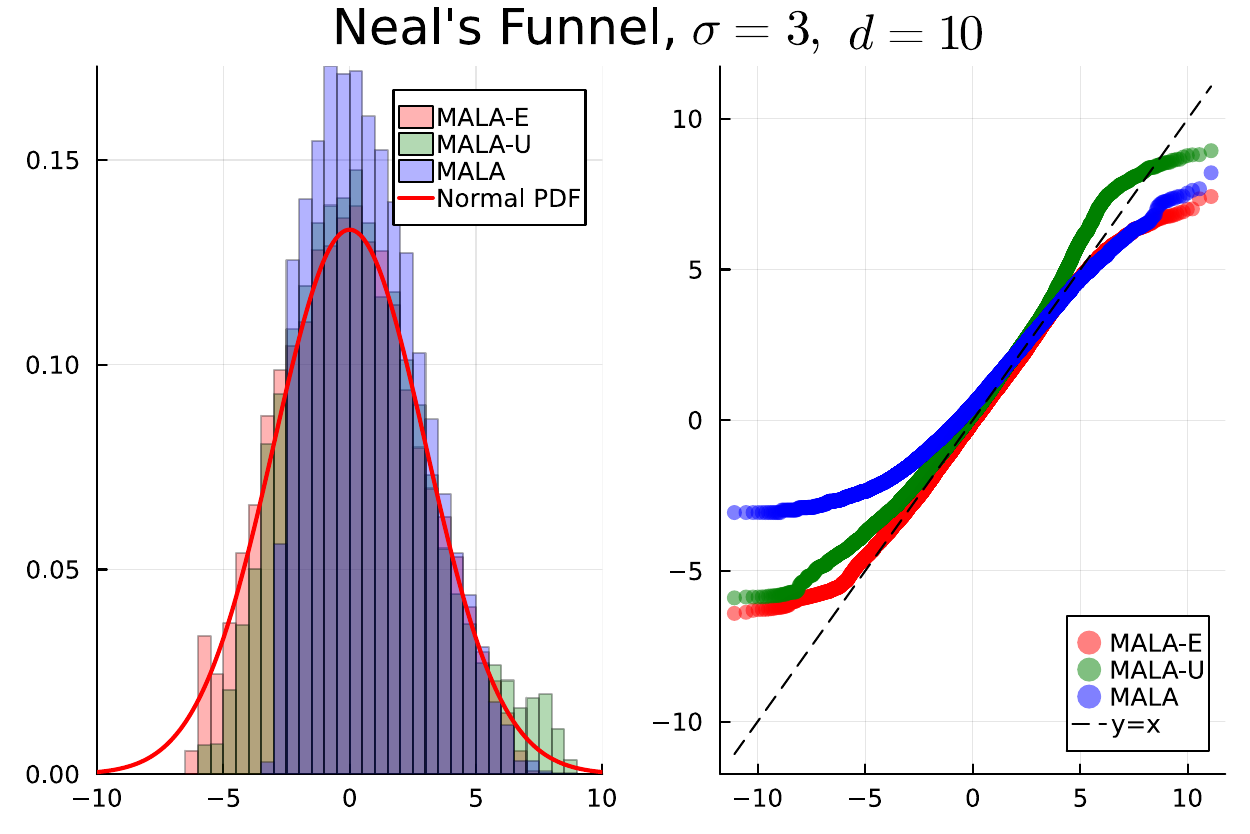}
        \includegraphics[width=0.48\linewidth]{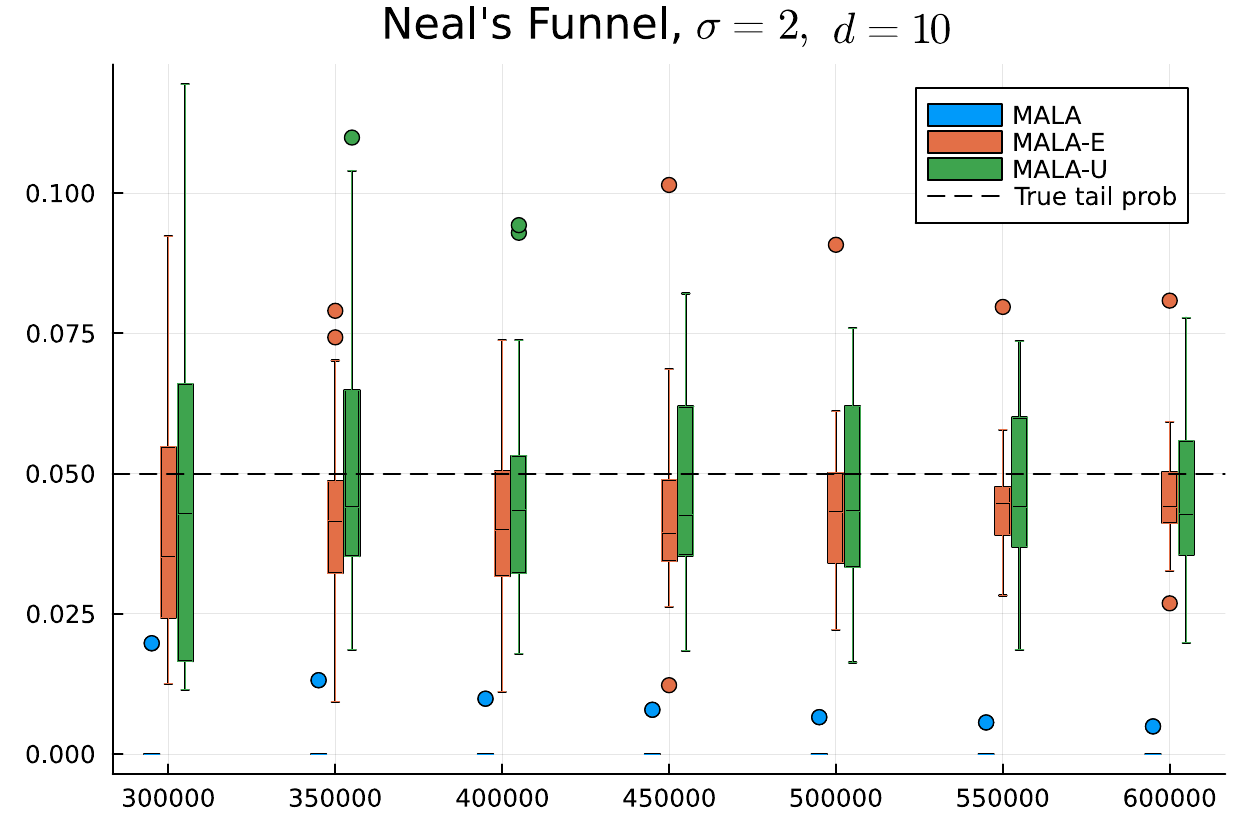}
    \caption{Left panel: histograms and Q-Q plots of the first components $x_1$ for MALA (blue) and Auxiliary-variable MALA with Uniform (green) and Exponential (red) randomized step sizes. The red solid line shows the true marginal density of $X_1$. Right panel: box plots for the estimation $\PP(X_1 < \xi_{0.05})$ over $20$ independent simulations of each algorithm, as the number of iterations ($x$-axis) increases.}
    \label{fig:FUNNEL}
\end{figure}

\subsection{The Rosenbrock distribution}
The two-dimensional Rosenbrock  distribution is defined as
\begin{equation}
    \label{eq: rosenbrock}
    X_1 \sim \mathcal N\!\left(0,\tfrac{1}{2a}\right),
\qquad
X_{2}\mid X_{1} \sim \mathcal N\!\left(X_{1}^2,\tfrac{1}{2b}\right)
\end{equation}

for parameters $a>0$ and $b>0$.
The nonlinear dependence between $X_1$ and $X_2$ produces a curved, banana-shaped density, see Figure~\ref{fig:isodensity_targets}, right panel.

We perform similar experiments as in  Section~\ref{sec:neals funnel}, where the step size $h$ is adaptively tuned as in \eqref{eq: step-size adaptation}. Figure~\ref{fig:BANANA}, left panel, shows the histograms and Q-Q plots of the first component obtained with $N = 2 \times 10^6$ samples of MALA and Auxiliary-variable MALA with Exponential and Uniform randomized step sizes targeting \eqref{eq: rosenbrock} with $a = 1/2, \, b = 50$. Initialization and burn in are set as in Section~\ref{sec:neals funnel}. The right-hand side shows the estimation of $\PP(|X_1|>\xi_{0.975})$ for the same target, where $\xi_{0.975}$ is the 975th percentile, as the number of iterations increases. The box plots are again obtained with 20 independent runs of each algorithm. 

The results suggest that the Auxiliary-variable MALA algorithms improve the exploration of the tails, although the differences in performance with respect to standard MALA are less dramatic compared with Section~\ref{sec:neals funnel}.
\begin{figure}[h!]
    \centering
    \includegraphics[width=0.48\linewidth]{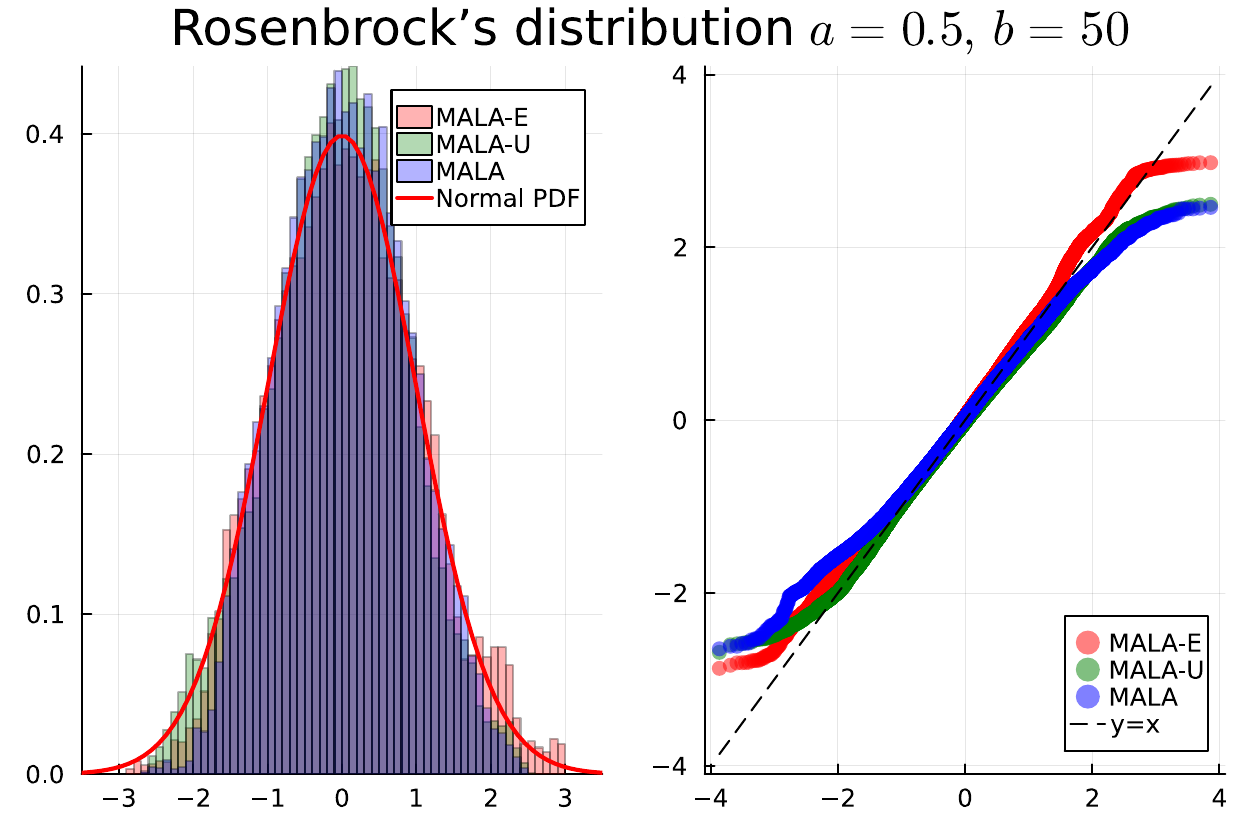}
    \includegraphics[width=0.48\linewidth]{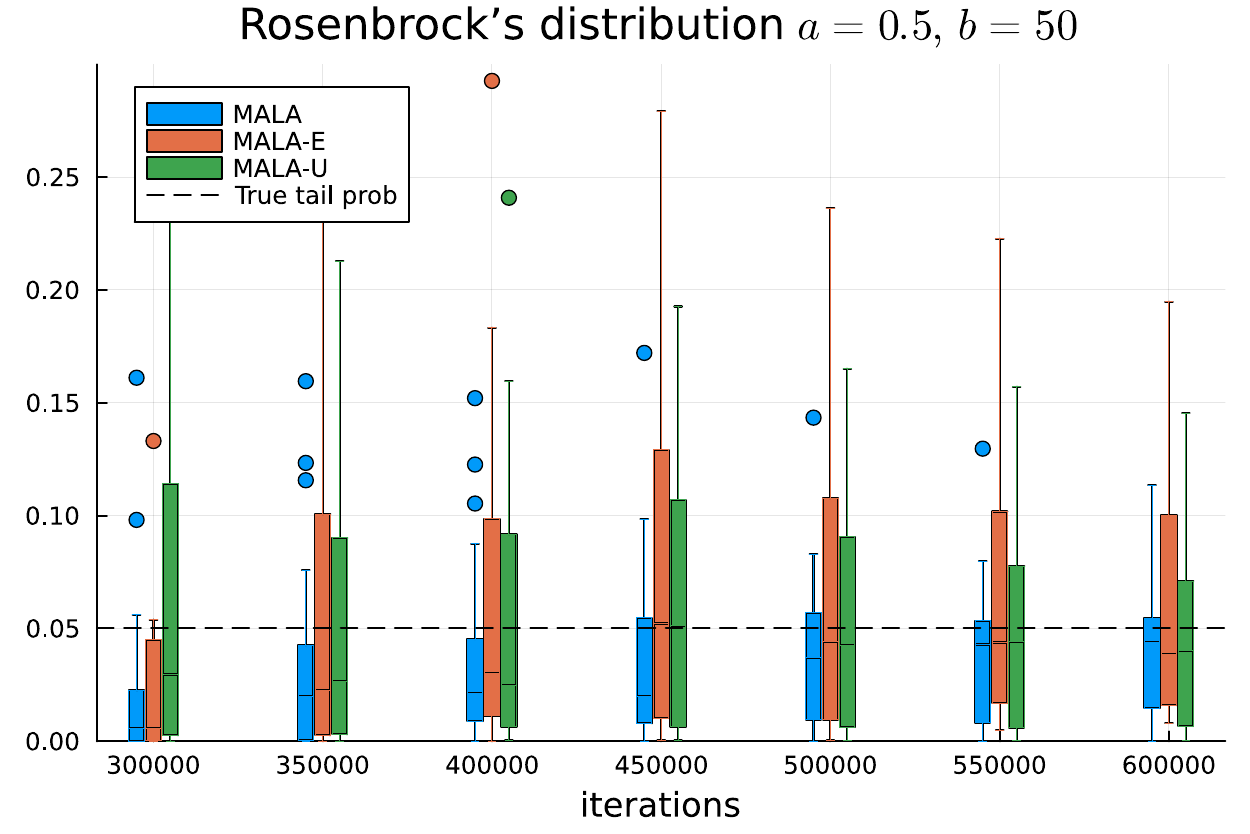}
    \caption{Left panel: same as in Figure~\ref{fig:FUNNEL}.  Right panel: box plots for the estimation $\PP(|X_1| > \xi_{0.975})$ over $20$ independent simulations of each algorithm, as the number of iterations increases ($x$-axis).}
    \label{fig:BANANA}
\end{figure}

\subsection{Poisson regression model}
For $i =1,2,\dots,n$, consider a regression model given by 
\begin{equation}
\label{eq: poisson_model}
Y_i \mid z_i \overset{\text{i.i.d}}{\sim}\mathrm{Poiss}(\exp(\langle z_i, x \rangle )),
\end{equation}
with unknown parameter $x \in \RR^d$ and sample size $n = 10d$. We set $d = 50$, $x \sim \cN(0,I_{d\times d})$ and we simulate synthetic covariates $z_i \sim \cN(0, I_{d\times d}/d)$ and $Y_i$ from \eqref{eq: poisson_model} for $i = 1,2,\dots,n$. The posterior distribution $\pi$, given by the product of the likelihood and a  standard Gaussian prior on $x$, is such that $\log \pi$ decays exponentially fast in the tails, creating a challenging target in particular at the tails of the distribution.  Figure~\ref{fig:poisson_reg} shows the traces of MALA and our Auxiliary-variable algorithms, with adaptive step size as in \eqref{eq: step-size adaptation}. All algorithms were initialized in the tail of the distribution at $X_0 \sim \cN(0, 10^2 \cdot I_{d\times d})$. The  Auxiliary-variable algorithms appear to reach the bulk of the distribution faster than standard MALA.

\begin{figure}
    \centering
    \includegraphics[width=0.9\linewidth]{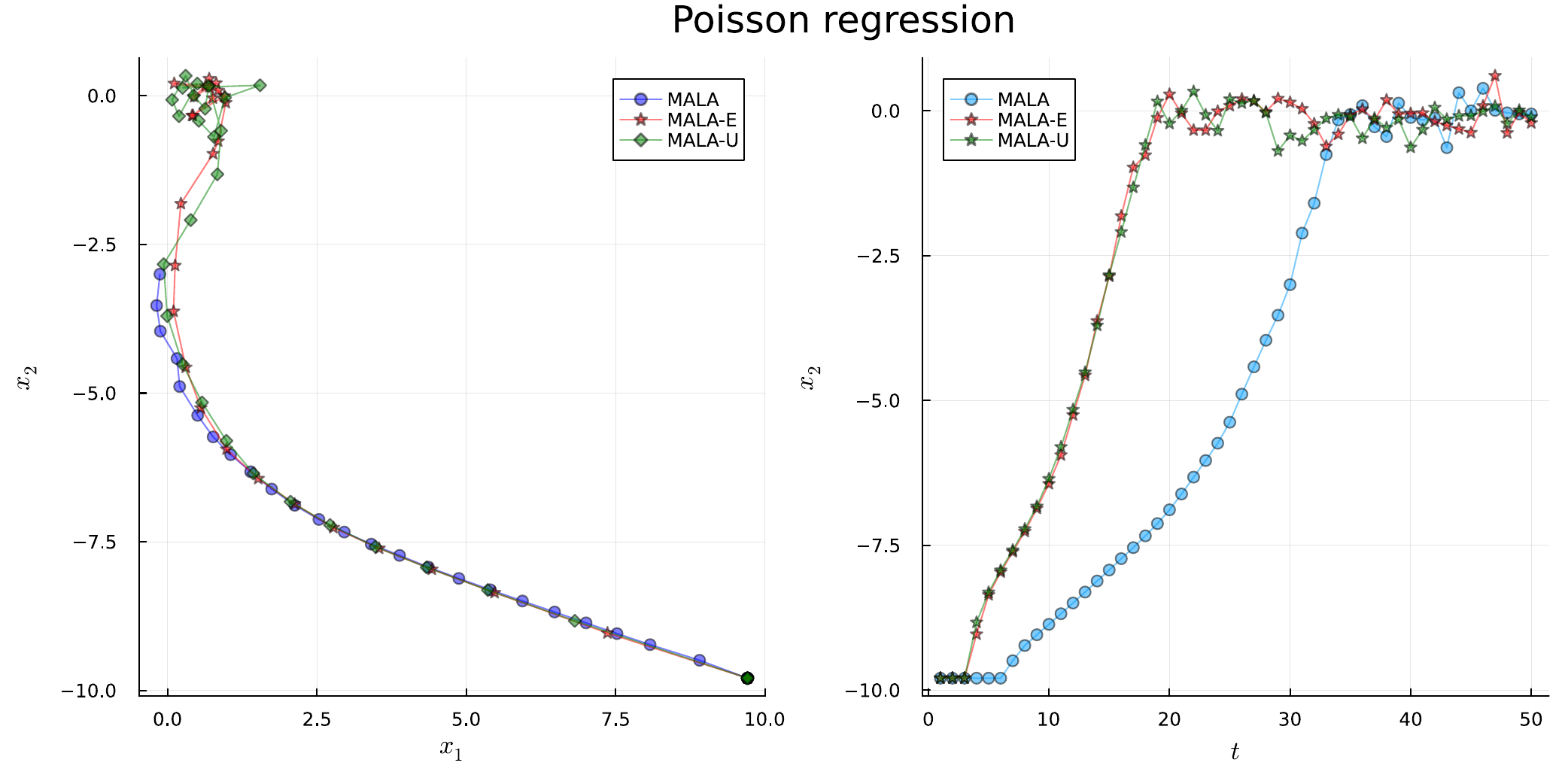}
    \caption{Convergence of MALA (blue) and Auxiliary-variable MALA with Uniform (green) and Exponential (red) randomized step sizes initialized in the tails of the distribution. Left panel: trajectories of the first two coordinates. Right panel: traces of the first coordinate.}
    \label{fig:poisson_reg}
\end{figure}
\section{Discussion}

We have shown how the randomization of a single one-dimensional parameter within a Metropolis--Hastings kernel can bring surprising benefits in terms of practical performance, and provided some novel theoretical results to explain this phenomenon.  A natural extension is to consider the randomization of vector- or matrix-valued parameters.  One obvious example is to randomize either a diagonal or dense preconditioning matrix.  The recent work of \cite{hird2025quantifying} highlights scenarios in which spectral gaps and mixing times can be improved through the careful choice of a fixed matrix to precondition sampling algorithms, which is a popular element of the applied sampling toolkit, but little consideration has been given to the idea of randomizing this matrix at each iteration of the algorithm by sampling from some appropriate distribution.  Indeed, in the case that this matrix is diagonal, the multi-dimensional Gaussian random walk Metropolis and Barker proposal algorithms can again be viewed as marginalized versions of simpler base kernels analogous to the one-dimensional scenario described in Section \ref{subsec:examples_randomized_kernels}. Another natural question concerning kernels with a randomized step size such as $\overline P_h$ and $M_h$ is the optimal choice of $\mu$, the randomizing distribution.  Related questions concerning the optimal choice of proposal noise in the random walk Metropolis and Barker proposal have been considered previously in \cite{vogrinc2023optimal, neal2011optimal}, but there are many ways in which such a question can be tackled and undoubtedly more to be learned.  We leave a thorough exploration of these topics for future work.

We note in passing that a version of Hamiltonian Monte Carlo with a modified kinetic energy, as studied in \cite{Livingstone::2019, zhang2016laplacian}, combined with a randomized step size, produces a marginalized algorithm which is very close to the Barker proposal scheme of \citet{livingstone2022barker} in one dimension.  In the case that the kinetic energy is $|p|$, meaning the augmented momentum variable $p$ follows a Laplace distribution, then a Hamiltonian proposal with a single leapfrog step takes the form
\[
x' = x + h \sign \left( p_0 + \frac{h}{2}(\log\pi)'(x) \right).
\]
If $h$ is then randomized using a half-Normal distribution, straightforward calculations show that the associated proposal density $Q_h$ becomes
\[
Q_h(x,y) = 2 F_L\left( (\log\pi)'(x)\frac{y-x}{h}\right) \mu\left(\frac{y-x}{h}\right),
\]
where $F_L$ denotes the cumulative distribution function (CDF) of the Laplace distribution. This represents a skew-symmetric distribution as described in \cite{azzalini2013skew}, and differs from the Barker proposal only in the choice of CDF, which is instead taken from the logistic distribution in the Barker case, for reasons outlined in \cite{hird2020fresh, livingstone2025foundations}.  A version of this algorithm with multiple leapfrog steps would therefore be of interest, in combining the robustness of Barker and of the randomized kernels with the advantages of methods with auxiliary momentum like Hamiltonian Monte Carlo. Again we leave a detailed exploration for future work.

\section*{Acknowledgement}
SG acknowledges support from the European Research Council (ERC), through the Starting grant ‘PrSc-HDBayLe’, project number 101076564. SL thanks Karan Garg for preliminary numerical work related to Section \ref{sec:experiments} during his MSc project at UCL.

\section*{Conflict of interest}
The authors declare that they have no conflict of interest
\bibliography{reference} 
\end{document}